\newcommand{\n}{($N_1$)\xspace}
\newcommand{\nn}{($N_2$)\xspace}
\newcommand{\nnn}{($N_3$)\xspace}
\newcommand{\us}{($S_1$)\xspace}
\newcommand{\uss}{($S_2$)\xspace}
\newcommand{\usss}{($S_3$)\xspace}
\newcommand{\configtikz}{
\tikzstyle{whitenode}=[draw,circle,fill=white,minimum size=8pt,inner sep=0pt]
\tikzstyle{blacknode}=[draw,circle,fill=black,minimum size=8pt,inner sep=0pt]
\tikzstyle{tnode}=[draw,ellipse,fill=white,minimum size=8pt,inner sep=0pt]
\tikzstyle{texte} =[fill=white, text=black]
}
\renewcommand{\neg}[1]{\overline{#1}}
\DeclareMathOperator{\D}{\Delta(G)}
\DeclareMathOperator{\mad}{mad}
\DeclareMathOperator{\ad}{ad}
\newtheorem{lemma}{Lemma}
\newtheorem{theorem}{Theorem}
\newtheorem{claim}{Claim}
\newtheorem{conjecture}{Conjecture}
\def\claimb{$$\vcenter\bgroup\advance\hsize by -8em\noindent
\refstepcounter{claimb}\ignorespaces\it}        
\def\endclaimb{\rm\egroup\leqno(\theclaim)$$\global\@ignoretrue}
\noindent \emph{Proof.} {}{#1}{}}{\hfill
\begin{document}

\title{Graphs with maximum degree $\Delta\geq 17$ and maximum average
  degree less than $3$ are list $2$-distance
  $(\Delta+2)$-colorable\thanks{This work was partially supported by
    the ANR grant EGOS 12 JS02 002 01}}

\author{Marthe Bonamy,
Benjamin Lévêque, Alexandre Pinlou\thanks{Second affiliation: D\'epartement MIAp, Universit\'e Paul-Val\'ery, Montpellier 3}\\ \normalsize{LIRMM, Universit\'e Montpellier 2, CNRS}\\ \small{\{marthe.bonamy, benjamin.leveque, alexandre.pinlou\}@lirmm.fr}}

\maketitle

\begin{abstract} For graphs of bounded maximum average degree, we
  consider the problem of \emph{2-distance coloring}. This is the
  problem of coloring the vertices while ensuring that two vertices
  that are adjacent or have a common neighbor receive different
  colors.  It is already known that planar graphs of girth at least
  $6$ and of maximum degree $\Delta$ are list 2-distance
  $(\Delta+2)$-colorable when $\Delta \geq 24$ (Borodin and Ivanova
  (2009)) and 2-distance $(\Delta+2)$-colorable when $\Delta \geq 18$
  (Borodin and Ivanova (2009)). We prove here that $\Delta \geq 17$
  suffices in both cases. More generally, we show that graphs with
  maximum average degree less than $3$ and $\Delta \geq 17$ are list
  2-distance $(\Delta+2)$-colorable. The proof can be transposed to
  list injective $(\Delta+1)$-coloring.
\end{abstract}

\section{Introduction}

In this paper, we consider only simple and finite graphs. A
\emph{$2$-distance k-coloring} of a graph $G$ is a coloring of the
vertices of $G$ with $k$ colors such that two vertices that are
adjacent or have a common neighbor receive distinct colors. We define
$\chi^2(G)$ as the smallest $k$ such that $G$ admits a $2$-distance
$k$-coloring. This is equivalent to a proper vertex-coloring of the
square of $G$, which is defined as a graph with the same set of
vertices as $G$, where two vertices are adjacent if and only if they
are adjacent or have a common neighbor in $G$. For example, the cycle
of length $5$ cannot be $2$-distance colored with less than $5$ colors
as any two vertices are either adjacent or have a common neighbor:
indeed, its square is the clique of size $5$. An extension of the
2-distance $k$-coloring is the \emph{list 2-distance $k$-coloring},
where instead of having the same set of $k$ colors for the whole
graph, every vertex is assigned some set of $k$ colors and has to be
colored from it. We define $\chi^2_\ell(G)$ as the smallest $k$ such
that $G$ admits a list $2$-distance $k$-coloring of $G$ for any list
assignment. Obviously, $2$-distance coloring is a sub-case of list
$2$-distance coloring (where the same color list is assigned to every
vertex), so for any graph $G$, $\chi^2_\ell(G) \geq
\chi^2(G)$. Kostochka and Woodall~\cite{kw01} even conjectured that it
is actually an equality. The conjecture is still open.

The study of $\chi^2(G)$ on planar graphs was initiated by Wegner in
1977~\cite{w77}, and has been actively studied because of the
conjecture below. The \emph{maximum degree} of a graph $G$ is
denoted $\D$.

\begin{conjecture}[Wegner~\cite{w77}]\label{conj:w77} If $G$ is a
planar graph, then:
\begin{itemize}
\item $\chi^2(G) \leq 7$ if $\D=3$
\item $\chi^2(G) \leq \D+5$ if $4 \leq \D \leq 7$
\item $\chi^2(G) \leq \lfloor \frac{3 \D}{2} \rfloor + 1$ if $\D \geq
8$
\end{itemize}
\end{conjecture}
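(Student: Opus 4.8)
The plan is to attack the statement by the discharging method applied to a minimal counterexample, treating the three regimes $\Delta = 3$, $4 \le \Delta \le 7$, and $\Delta \ge 8$ separately because the target functions differ; I would organize the main effort around the last regime, whose bound $\lfloor 3\Delta/2 \rfloor + 1$ is what gives the conjecture its difficulty and its tightness. So suppose $G$ is a planar graph with $\Delta(G) \ge 8$ that violates $\chi^2(G) \le \lfloor 3\Delta/2 \rfloor + 1$ while minimizing $|V(G)| + |E(G)|$. First I would record the standard consequences of minimality: $G$ is connected, and it contains no vertex or small gadget whose removal leaves fewer constraints than there are available colors.

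Second, the reducible configurations. Since $\chi^2(G)$ is just the ordinary chromatic number of the square $G^2$, the basic tool is greedy extension: a vertex $v$ can be colored last provided at most $\lfloor 3\Delta/2 \rfloor$ vertices within distance $2$ of $v$ are already colored. Hence every configuration containing a vertex with a small closed second neighborhood is reducible, and sharper reductions come from deleting small subgraphs, identifying vertices, or simultaneously recoloring several vertices whose distance-$2$ neighborhoods overlap, then extending the coloring guaranteed by minimality. The hard part will be assembling a family of reducible configurations rich enough to exclude every dense local structure a planar graph can carry.

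Third, discharging. I would assign each vertex $v$ the charge $d(v) - 4$ and each face $f$ the charge $\ell(f) - 4$, so that Euler's formula makes the total charge exactly $-8 < 0$. One then designs rules moving charge from high-degree vertices and long faces toward low-degree vertices and triangles, and checks that, once all reducible configurations are forbidden, every vertex and every face finishes with non-negative charge, contradicting the negative total.

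The main obstacle is the constant $3/2$ itself. Wegner exhibited planar graphs attaining $\lfloor 3\Delta/2 \rfloor + 1$, so the bound is tight and the discharging must waste essentially nothing, whereas the best unconditional result available is only $\chi^2(G) \le \lceil 5\Delta/3 \rceil + O(1)$; shrinking the leading constant from $5/3$ to $3/2$ is precisely the content of the conjecture. Note too that the sparsity hypotheses exploited elsewhere in this paper (which require $\mad(G) < 3$, i.e. girth at least $6$) are unavailable here, since the extremal examples for Wegner's bound are essentially triangulations with $\mad(G)$ close to $6$; a proof therefore cannot lean on sparsity and must instead supply reductions tuned to these dense extremal structures. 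For this reason the conjecture remains open for $\Delta \ge 8$ (only the $\Delta = 3$ case has since been settled), and I expect the true crux of any attempt to be the design of reducible configurations powerful enough to defeat the tight constructions.
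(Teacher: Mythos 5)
The statement you were asked about is Conjecture~\ref{conj:w77}, which the paper does not prove and explicitly states remains open; it appears only as motivation for the weaker, sparsity-restricted results the paper actually establishes. So there is no proof in the paper to compare against, and your text, read as a proof attempt, has a genuine and total gap: everything that would constitute the mathematical content is deferred. Your step two says ``the hard part will be assembling a family of reducible configurations rich enough,'' and step three says one ``designs rules'' and ``checks'' the final charges --- but no configuration is exhibited, no reducibility argument is carried out, and no discharging rule is stated or verified. What you have written is the generic template of the discharging method (correct as far as it goes: the charge assignment $d(v)-4$, $\ell(f)-4$ does give total charge $-8$ by Euler's formula), which would serve equally as a ``plan'' for any planar coloring statement, true or false. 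A plan whose every load-bearing step is a placeholder is not a proof, and in this case the placeholders cannot currently be filled: the bound $\lfloor 3\Delta/2\rfloor+1$ is attained by Wegner's examples, so any discharging argument must be lossless against dense, near-triangulation extremal structures, and no one has produced reducible configurations adequate to this. Your own final paragraph concedes exactly this.

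To your credit, the surrounding mathematical commentary is accurate and shows you understand the landscape: the conjecture is open for $\Delta\geq 8$, the $\Delta=3$ case has been settled, the best general bounds have leading constant above $3/2$, and you correctly observe that the $\mad(G)<3$ sparsity machinery of this paper (equivalently girth at least $6$ for planar graphs, via Lemma~\ref{lem:euler}) is structurally unavailable here because the extremal graphs for Wegner's bound are dense. But the honest conclusion is that you have written an informed survey of why the problem is hard, not an argument for the statement; the correct response to this prompt was to identify the statement as an open conjecture rather than to offer a proof skeleton, and any claim to have ``proved'' it via this outline would be wrong.
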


This conjecture remains open.


Note that any graph $G$ satisfies $\chi^2(G) \geq \D+1$. Indeed, if we
consider a vertex of maximal degree and its neighbors, they form a set
of $\D+1$ vertices, any two of which are adjacent or have a common
neighbor. Hence at least $\D+1$ colors are needed for a $2$-distance
coloring of $G$. It is therefore natural to ask when this lower bound
is reached. For that purpose, we can study, as suggested by Wang and
Lih~\cite{wl03}, what conditions on the sparseness of the graph can be
sufficient to ensure the equality holds.  

A first measure of the sparseness of a planar graph is its girth.  The
\emph{girth} of a graph $G$, denoted $g(G)$, is the length of a
shortest cycle.  Wang and Lih~\cite{wl03} conjectured that for any
integer $k \geq 5$, there exists an integer $D(k)$ such that for every
planar graph $G$ verifying $g(G) \geq k$ and $\D \geq D(k)$,
$\chi^2(G)=\D+1$.  This was proved by Borodin, Ivanova and
Noestroeva~\cite{bin04,bin08} to be true for $k \geq 7$, even in the
case of list-coloring, and false for $k\in \{5,6 \}$. So far, in the
case of list coloring, it is known~\cite{blp12,i11} that we can choose
$D(7)= 16$, $D(8)= 10$, $D(9) = 8$, $D(10)= 6$, $D(12)= 5$.
Borodin, Ivanova and Neustroeva~\cite{bin06} proved that the case
$k=6$ is true on a restricted class of graphs, i.e.  for a planar
graph $G$ with girth $6$ where every edge is incident to a vertex of
degree at most two and $\D\geq 179$, we have $\chi^2(G)\leq \D+1$.
Dvo\v{r}\'{a}k et al.~\cite{dkns08} proved that the case $k=6$ is true
by allowing one more color, i.e.  for a planar graph $G$ with girth
$6$ and $\D\geq 8821$, we have $\chi^2(G)\leq \D+2$. They also
conjectured that the same holds for a planar graph $G$ with girth $5$
and sufficiently large $\D$, but this remains open. Borodin and
Ivanova improved~\cite{bi08} Dvo\v{r}\'{a}k et al.'s result and
extended it to list-coloring~\cite{bi08b,bi09} as follows.

\begin{theorem}[Borodin and Ivanova~\cite{bi08}]\label{thm:bi08} Every
planar graph $G$ with $\Delta(G) \geq 18$ and $g(G)\geq 6$ admits a
$2$-distance $(\Delta(G)+2)$-coloring.
\end{theorem}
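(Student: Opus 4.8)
The plan is to reduce the statement to a condition on maximum average degree and then run a discharging argument on a minimal counterexample. First I would observe that, by Euler's formula, every planar graph $G$ with $g(G) \geq 6$ satisfies $\mad(G) < 3$: if a subgraph is a forest its average degree is below $2$, and if it contains a cycle then it is planar with girth at least $6$, so $|E| \leq \frac{6}{4}(|V| - 2) < \frac{3}{2}|V|$, whence $2|E| < 3|V|$. It therefore suffices to prove the stronger claim that every graph $G$ with $\mad(G) < 3$ and $\Delta(G) \geq 18$ is $2$-distance $(\Delta(G)+2)$-colorable, which places us in the maximum-average-degree setting that the paper's main theorem treats. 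I would then argue by contradiction: let $G$ be a counterexample minimizing $|V(G)|$ and, among those, $|E(G)|$, so that every proper subgraph of $G$ admits a $2$-distance $(\Delta(G)+2)$-coloring.

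The heart of the proof is a list of \emph{reducible configurations}, i.e. local structures that a minimal counterexample cannot contain. The guiding principle is that when a vertex $v$ is reinserted into an already-colored subgraph, the colors forbidden for $v$ are exactly those used on the vertices at distance at most $2$ from $v$; since $g(G) \geq 6$ forces this neighborhood to be tree-like, there are at most $\sum_{u \sim v} d(u)$ of them. I would first dispose of the easy cases: there is no vertex of degree at most $1$ (a leaf $v$ with neighbor $u$ sees at most $d(u) \leq \Delta(G) < \Delta(G) + 2$ colors, so it can be colored after the rest of $G$ is colored by minimality), hence $\delta(G) \geq 2$. The substantive configurations then concern degree-$2$ vertices and sparse neighborhoods: a degree-$2$ vertex both of whose neighbors have small degree, several degree-$2$ vertices sharing a neighbor, or long ``threads'' of consecutive degree-$2$ vertices. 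For each such configuration I would delete the offending low-degree vertices, $2$-distance color the remaining graph by minimality, and extend; when several vertices must be recolored at once, the extension is justified either by ordering the deletions so that each is colored greedily, or by a Hall-type argument (a system of distinct representatives) when no such order exists.

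With the reducible configurations in hand, I would assign to each vertex $v$ the initial charge $\mu(v) = d(v) - 3$. Because $\mad(G) < 3$ gives $\sum_{v} d(v) = 2|E(G)| < 3|V(G)|$, the total charge $\sum_v \mu(v)$ is strictly negative. Only vertices of degree $2$ start negative (with charge $-1$), so the discharging rules would move charge from vertices of degree at least $4$ toward nearby degree-$2$ vertices: typically a fixed amount sent across each edge to an adjacent degree-$2$ vertex, and a further amount to degree-$2$ vertices reached through a common neighbor or along a thread. The verification step is to check, case by case on $d(v)$, that once all forbidden configurations are excluded every vertex ends with nonnegative final charge; summing over all vertices then contradicts the strict negativity of the total, completing the argument.

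I expect the reducibility of the degree-$2$ configurations to be the main obstacle. A single degree-$2$ vertex whose two neighbors both have degree $\Delta(G)$ already sees up to $2\Delta(G) > \Delta(G) + 2$ forbidden colors, so it is \emph{not} reducible in isolation; reducibility only emerges for carefully chosen clusters of low-degree vertices, and showing that a partial coloring extends to such a cluster is precisely where the counting becomes delicate and a naive greedy bound fails. Dovetailing these configurations with the discharging rules so that their combined absence forces every final charge to be nonnegative, while keeping the discharging scheme simple enough to analyze, is the crux on which the whole proof turns, and is exactly where lowering the threshold from $\Delta(G) \geq 18$ to $\Delta(G) \geq 17$ (as in the paper's main result) demands the most care.
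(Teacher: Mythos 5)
Your overall skeleton is the same as the paper's: reduce girth $6$ to $\mad(G)<3$ via Euler's formula, take a minimal counterexample, forbid configurations by coloring-and-extending, then discharge against the initial charges $d(v)-3$. But there is a genuine gap at the step you yourself flag as the crux: your discharging is \emph{local} (charge flows from high-degree vertices to ``nearby'' degree-$2$ vertices), and no local scheme of this kind can work, for any threshold on $\Delta$. Consider a $k$-regular graph in which every edge is subdivided into a path with two internal degree-$2$ vertices, i.e.\ every edge $uu'$ becomes $u-x-a-u'$. This graph has average degree $3k/(k+1)<3$ (and indeed $\mad<3$), every vertex has degree $k$ or $2$, and it contains none of the bounded-size reducible configurations you describe: each degree-$2$ vertex is adjacent to a degree-$2$ vertex and to a degree-$k$ vertex, so the greedy/extension arguments that kill isolated sparse structures do not apply. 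Now each adjacent pair $x,a$ carries total charge $-2$, but the only non-deficient vertices within distance one are the two degree-$k$ endpoints $u,u'$, and each of those can afford to send at most $(k-3)/k<1$ per neighbor (in the worst case all $k$ of its neighbors are such pair-vertices). So the pair receives at most $2(k-3)/k<2$, and the deficit $6/k$ per pair cannot be repaired by any rule of bounded radius, since in long chains or cycles of alternating $k$-vertices and pairs, every vertex within any fixed distance is itself deficient or saturated.

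What is missing is precisely the paper's central idea: \emph{global} discharging. The paper introduces support, positive and negative vertices, a rule $R_g$ in which every positive vertex pays $\frac{2}{5}$ into a common pot and every negative vertex draws $\frac{1}{5}$ from it, and then must prove the pot never goes negative (Lemma~\ref{lem:Rgvalid}). That proof is not a consequence of the finite list of local configurations; it requires showing that in a minimal counterexample the subgraph $H(G)$ spanned by edges at support vertices decomposes into cacti whose cycles have an odd number of support vertices, or locks (Lemma~\ref{lem:cactus}), and the reducibility of the offending $2$-connected subgraphs of $H(G)$ --- which are of \emph{unbounded} size, like the subdivided cycles in the example above --- is established by applying Brooks' degree-choosability theorem (Lemma~\ref{claim:Brooks}) to an auxiliary graph on the support vertices, not by a greedy ordering or a Hall-type argument. (Borodin and Ivanova's original proof of the $\Delta\geq 18$ statement likewise needs a global mechanism, their ``feeding areas.'') So your proposal is the right frame with the essential engine absent: without the global pot and the structural analysis that justifies it, the case analysis in your final step cannot close.
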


\begin{theorem}[Borodin and Ivanova~\cite{bi09}]\label{thm:bi09} Every
planar graph $G$ with $\Delta(G) \geq 24$ and $g(G)\geq 6$ admits a
list $2$-distance $(\Delta(G)+2)$-coloring.
\end{theorem}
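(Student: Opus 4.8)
The plan is to prove the statement via the discharging method applied to a hypothetical minimal counterexample, after first converting the topological hypothesis into a sparsity bound. A planar graph $G$ with $g(G)\geq 6$ satisfies $\mad(G)<3$: by Euler's formula any $n$-vertex planar graph of girth at least $6$ has at most $\frac{3}{2}(n-2)$ edges, and since every subgraph is again planar with girth at least $6$, the same bound holds throughout, giving $\mad(G)<3$. It therefore suffices to prove the stronger statement that every graph $G$ with $\mad(G)<3$ and $\Delta(G)\geq 24$ is list $2$-distance $(\Delta(G)+2)$-colorable, which is precisely the kind of result the authors later sharpen to $\Delta(G)\geq 17$.

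Suppose for contradiction that $H$ is such a graph admitting no list $2$-distance $(\Delta+2)$-coloring, chosen with the fewest vertices, together with a list assignment $L$ witnessing this. The first half of the argument isolates a family of \emph{reducible configurations}, that is, local structures that cannot occur in $H$. The uniform principle is: if a configuration $C$ appears, delete (or merely uncolor) a few of its vertices, color the resulting smaller graph by minimality, and then extend. An uncolored vertex $v$ of degree $d$ with neighbors of degrees $d_1,\dots,d_d$ has at most $\sum_i d_i$ neighbors in the square of $H$; whenever this quantity is smaller than $|L(v)|=\Delta+2$, a free color remains and $v$ can be colored last. Iterating this observation, I would rule out, among others, vertices of degree $1$, dangerous adjacent pairs of degree-$2$ vertices, and more generally configurations in which too many low-degree vertices cluster together.

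For the discharging phase, assign to each vertex $v$ the initial charge $\mu(v)=d(v)-3$, so that $\sum_v \mu(v)=2|E(H)|-3|V(H)|<0$ precisely because $\mad(H)<3$. I would then design rules under which high-degree vertices donate charge to their low-degree neighbors, degree-$2$ vertices being the principal recipients since each needs a total of $+1$ to reach $0$. The point is that the excluded reducible configurations are exactly what forces every low-degree vertex to have enough high-degree neighbors to be fully compensated, so that after redistribution no vertex keeps negative charge. Summing the nonnegative final charges then contradicts the strictly negative total, which finishes the proof.

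The crux, and the main obstacle, is the tight coupling between the two phases. On the reducibility side the list setting forbids any fixed global palette, so each extension step must proceed by a careful count of the colors still available at every uncolored vertex, and in the tight cases by a greedy ordering or a Hall-type (system of distinct representatives) argument guaranteeing a simultaneous choice. One must then assemble a set of reducible configurations that is simultaneously genuinely reducible under these list constraints and rich enough that the discharging rules can restore nonnegativity for $\Delta$ as small as $24$ (or $17$). Balancing these competing demands is the delicate, heavily case-based heart of the argument.
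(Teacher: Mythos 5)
Your overall strategy coincides with the paper's: convert planarity plus girth at least $6$ into $\mad(G)<3$ (the paper's Lemma~\ref{lem:euler}), then prove the stronger sparsity statement by finding reducible configurations in a minimal counterexample and running a discharging argument. The problem is that what you have written is a plan for a proof, not a proof. In a discharging argument the entire mathematical content is (i) the explicit list of forbidden configurations, (ii) the verification that each is reducible under list constraints, (iii) the explicit discharging rules, and (iv) the case analysis showing every vertex ends with charge at least the target value; you defer all four, and you say yourself that balancing these demands is ``the delicate, heavily case-based heart of the argument.'' There is no a priori guarantee that such a balance exists at $\Delta\geq 24$ --- that existence \emph{is} the theorem. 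Moreover, your expectation that purely local rules (``high-degree vertices donate charge to their low-degree neighbors'') will suffice is doubtful: the paper's proof (which gets $\Delta\geq 17$ and hence implies this statement) needs \emph{global} discharging --- the rule $R_g$ with a common pot, the structure theory of support vertices, the cactus/lock analysis of $H(G)$, and Brooks' lemma (Lemma~\ref{claim:Brooks}) --- precisely because some deficient vertices (e.g.\ a degree-$2$ vertex flanked by low-degree vertices) have no high-degree vertex within any bounded radius to pay for them, yet are not reducible outright.

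A second, concrete flaw is your induction setup. You take $H$ minimal among graphs with $\mad<3$, $\Delta\geq 24$, and no list $2$-distance $(\Delta+2)$-coloring. This property is not closed under vertex or edge deletion: removing vertices can destroy the hypothesis $\Delta\geq 24$, and can lower $\Delta$ so that ``$(\Delta+2)$-colorable'' refers to fewer colors for the subgraph than for $H$; in either case ``color the smaller graph by minimality'' is not licensed. The paper addresses exactly this by proving the relaxed statement that for any fixed $k\geq 17$, every graph with $\Delta(G)\leq k$ and $\mad(G)<3$ satisfies $\chi^2_\ell(G)\leq k+2$, which \emph{is} closed under deletion. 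Carrying a fixed witness list assignment $L$, as you propose, does not by itself repair this; you would need to restate the induction hypothesis in the paper's form (fixed $k$, lists of size $k+2$) for the minimality step to be valid.
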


Theorems~\ref{thm:bi08} and~\ref{thm:bi09} are optimal with regards to
the number of colors, as shown by the family of graphs presented by
Borodin et al.~\cite{bgint04}, which are of increasing maximum degree,
of girth $6$ and are not $2$-distance $(\Delta+1)$-colorable.  We
improve Theorems~\ref{thm:bi08} and~\ref{thm:bi09} as follows.

\begin{theorem}\label{cor:m6} Every planar graph $G$ with $\Delta(G)
\geq 17$ and $g(G)\geq 6$ admits a list $2$-distance
$(\Delta(G)+2)$-coloring.
\end{theorem}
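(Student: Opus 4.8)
The plan is to derive Theorem~\ref{cor:m6} from the more general statement promised in the abstract: every graph $G$ with $\mad(G) < 3$ and $\Delta(G) \geq 17$ is list $2$-distance $(\Delta+2)$-colorable. The reduction is immediate from Euler's formula, which gives $\mad(G) < \frac{2g}{g-2}$ for a planar graph of girth $g$; taking $g = 6$ yields $\mad(G) < \frac{12}{4} = 3$. Since any subgraph of a planar girth-$6$ graph is again planar with girth at least $6$ (or is a forest, where the ratio is below $2$), the bound $\mad < 3$ holds for all subgraphs, and it suffices to prove the $\mad$ version.

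To prove the $\mad$ statement, I would argue by contradiction. Let $H$ be a counterexample minimizing the number of edges, and fix a list assignment $L$ with $|L(v)| = \Delta + 2$ for every $v$ witnessing that $H$ is not colorable. Every proper subgraph of $H$ is list $2$-distance colorable by minimality, so I would extract structural restrictions by exhibiting \emph{reducible configurations}: local subgraphs that cannot occur in $H$, because if one did I could delete or uncolor a few vertices, color the remainder by minimality, and extend the coloring back, contradicting the choice of $H$. The easiest configuration is a vertex of degree at most $1$: its $2$-distance neighborhood has size at most $\Delta$, so after coloring the rest of $H$ at least two colors of its list survive. Hence $H$ has minimum degree at least $2$.

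The substance of the argument is to constrain how degree-$2$ (and other low-degree) vertices may cluster. A single degree-$2$ vertex is \emph{not} reducible on its own, since its two neighbors may both have degree $\Delta$, forcing up to roughly $2\Delta$ colors in its $2$-distance neighborhood. The relevant configurations therefore couple a low-degree vertex with restrictions on the degrees nearby: threads of consecutive degree-$2$ vertices, and vertices that carry too many degree-$2$ neighbors. To color such a cluster I would uncolor it entirely, color $H$ minus the cluster by minimality, and then recolor the cluster in a carefully chosen order; when the greedy bound is too weak I would instead build a bipartite incidence between the uncolored vertices and their available colors and finish with a Hall-type matching argument. Each lemma thus converts a counting inequality on $2$-distance degrees into a valid extension.

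With enough reducible configurations established, I would close by discharging. Assign each vertex the charge $\mu(v) = d(v) - 3$; since $\mad(H) < 3$ the total charge is negative, while only degree-$2$ vertices start negative (at $-1$), degree-$3$ vertices are neutral, and vertices of degree at least $4$ carry a surplus. I would send charge from high-degree vertices to the degree-$2$ vertices they dominate, and use the absence of the forbidden configurations to guarantee every vertex ends non-negative, contradicting the negative total. The main obstacle lives entirely in the reducibility step and is twofold: handling clusters that must be colored simultaneously, where the naive greedy count fails and a matching argument is needed; and calibrating the configurations tightly enough for the discharging to succeed at the threshold $\Delta \geq 17$ rather than the weaker $18$ and $24$ of Theorems~\ref{thm:bi08} and~\ref{thm:bi09}. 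Finally, the same reducibility-plus-discharging scheme should transpose to list injective $(\Delta+1)$-coloring, as injective coloring only constrains pairs of vertices sharing a common neighbor and so imposes strictly weaker conditions.
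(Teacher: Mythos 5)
Your reduction of Theorem~\ref{cor:m6} to the $\mad<3$ statement via Euler's formula is exactly what the paper does (Lemma~\ref{lem:euler}), and your degree-at-most-$1$ configuration matches the paper's ($C_1$). The gap is in your plan for the $\mad$ theorem itself. You propose a purely \emph{local} discharging: ``send charge from high-degree vertices to the degree-$2$ vertices they dominate.'' At the threshold $\Delta\geq 17$ this cannot work, and the obstruction is concrete. Consider two adjacent vertices of degree $2$ (the paper's support vertices of Type ($S_1$)). Reducibility can only force each of them to have its \emph{other} neighbor of degree $k$ (configuration ($C_2$)); it cannot eliminate the pair. Such a pair carries a total deficit of $2$, and the only vertices within bounded distance able to donate are the two degree-$k$ vertices at its ends, so any local scheme must extract, on average, a full unit of charge per incident pair from each degree-$k$ vertex. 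A degree-$k$ vertex all of whose neighbors lie in such pairs then ends with weight $k-k=0<3$ (negative charge in your normalization), and no recalibration of local amounts escapes this: the paper's rule $R_4$ sends $\frac{4}{5}$ per neighbor, essentially the maximum a degree-$17$ vertex can afford, and $\frac{4}{5}<1$. Worse, configurations such as two degree-$k$ vertices joined by many disjoint paths through ($S_1$)-pairs can only be excluded by arguments about arbitrarily large subgraphs, which no finite list of local reducible configurations captures.

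This is precisely why the paper's proof is \emph{globally} discharged: positive vertices pay $\frac{2}{5}$ into a common pot and negative vertices (including the ($S_1$)-pairs above) each draw the missing $\frac{1}{5}$ from it (rule $R_g$). Proving the pot stays non-negative (Lemma~\ref{lem:Rgvalid}) requires a structural theory of the subgraph $H(G)$ spanned by edges at support vertices: its components are locks or cacti whose cycles carry an odd number of support vertices (Lemma~\ref{lem:cactus}), and establishing this needs both a non-local reducible configuration (($C_{11}$), whose reduction colors a whole constellation of vertices) and Brooks' theorem for list coloring (Lemma~\ref{claim:Brooks}) applied to an auxiliary graph on the support vertices --- a genuinely stronger tool than the Hall-type matching you invoke, which only recovers greedy-strength bounds. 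So your closing assessment that ``the main obstacle lives entirely in the reducibility step'' is backwards: the reducibility step follows standard lines, and the substance of improving $18$ and $24$ to $17$ is the global counting argument your outline omits.
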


Another way to measure the sparseness of a graph is through its
maximum average degree. The \emph{average degree} of a graph $G$,
denoted $\ad(G)$, is $\frac{\sum_{v \in
    V}d(v)}{|V|}=\frac{2|E|}{|V|}$. The \emph{maximum average degree}
of a graph $G$, denoted $\mad(G)$, is the maximum of $\ad(H)$ over all
subgraphs $H$ of $G$. Intuitively, this measures the sparseness of a
graph because it states how great the concentration of edges in a same
area can be. For example, stating that $\mad(G)$ has to be smaller
than $2$ means that $G$ is a forest. Using this measure, we prove a
more general theorem than Theorem~\ref{cor:m6}.

\begin{theorem}\label{thm:main} Every graph $G$ with $\Delta(G) \geq
17$ and $\mad(G)<3$ admits a list $2$-distance
$(\Delta(G)+2)$-coloring.
\end{theorem}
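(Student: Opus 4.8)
The plan is to argue by contradiction through the discharging method. Fix $\Delta \geq 17$ and, aiming at a contradiction, let $G$ be a vertex-minimal graph with maximum degree at most $\Delta$ and $\mad(G) < 3$ that, for some assignment of lists of size $\Delta + 2$, admits no list $2$-distance coloring. Since $\mad$ is monotone under taking subgraphs, every proper subgraph of $G$ still has maximum average degree below $3$ and is therefore list $2$-distance $(\Delta+2)$-colorable by minimality. First I would record the trivial bound: a vertex $v$ colored last forbids at most $\sum_{u \sim v} d(u)$ colors (its neighbors and their neighbors), so any vertex whose neighborhoods are all light can be deleted and reinserted greedily.

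The heart of the proof is a list of reducible configurations. I would first show that $G$ has no vertex of degree $1$ (delete it; its at most $\Delta$ constraints leave a free color), and then attack the $2$-vertices, which are the real difficulty. A single $2$-vertex $u$ with neighbors $a,b$ forbids at most $d(a)+d(b)$ colors when colored last, so it is reducible unless both $a$ and $b$ have large degree; an isolated $2$-vertex between two $\Delta$-vertices is thus not reducible on its own and must instead be paid for during discharging. Accordingly I would forbid the denser local patterns: long threads of consecutive $2$-vertices, and small-degree vertices carrying more $2$-neighbors than they can support. Each such configuration is handled by deleting a carefully chosen set (a vertex together with a pendant thread, say), $2$-distance coloring the remainder by minimality, and extending. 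When a naive greedy count just fails, I would use a recoloring step: uncolor one already-colored vertex to liberate a color, color the critical vertex, then recolor the freed vertex from the colors its own reduced neighborhood still permits. It is exactly here that $\Delta \geq 17$ is used: the thresholds on how many $2$-neighbors a $k$-vertex may carry, and on admissible thread lengths, are tuned so that the $\Delta + 2$ available colors suffice down to $\Delta = 17$.

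For the discharging phase I would assign each vertex the initial charge $\mu(v) = d(v) - 3$. Because $\ad(G) \leq \mad(G) < 3$, the total charge $\sum_v \mu(v) = 2|E| - 3|V|$ is strictly negative. The rules move charge from the $3^+$-vertices, which have nonnegative initial charge, toward the $1$- and $2$-vertices, which need $+2$ and $+1$ respectively to reach $0$: a typical rule has every $3^+$-vertex donate a fixed amount along each incident edge to a light neighbor, with larger donations across threads. I would then verify, configuration by configuration using the reducibility lemmas, that after discharging every vertex has charge at least $0$: a $2$-vertex collects $1$ from its two neighbors combined, while a heavy vertex stays solvent because reducibility caps the number of light vertices draining it. The resulting nonnegative total contradicts the strict negativity above, so no counterexample exists.

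The main obstacle I anticipate is not any single reducibility argument but their joint calibration with the discharging weights. The $2$-vertices sitting between two heavy vertices cannot be removed locally, so the entire burden of accounting for them falls on the charge redistribution, and the rules must simultaneously leave every heavy vertex solvent while fully funding every light vertex, all with the color budget pinned at $\Delta + 2$ and $\Delta$ as small as $17$. Producing a family of configurations that is at once reducible and, via the discharging rules, unavoidable is the delicate part. The transposition to list injective $(\Delta+1)$-coloring promised in the abstract should then follow by replaying the same configurations under the injective (rather than square-graph) constraint, which forbids one fewer color per conflict and so tightens every count by the single unit that separates $\Delta + 1$ from $\Delta + 2$.
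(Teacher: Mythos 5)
Your outline reproduces the generic shape of a discharging proof (minimal counterexample, reducible configurations, charge $d(v)-3$, local transfer rules), but it misses the one idea that the paper actually needs, and the place where your plan breaks is identifiable precisely. Consider two adjacent $2$-vertices $a,b$ (the paper's support vertices of Type ($S_1$)). The reducibility argument you sketch forces their outer neighbors $u,v$ to have degree exactly $k=\Delta$ (this is Configuration ($C_2$)), but the path $u-a-b-v$ with $d(u)=d(v)=k$ is \emph{not} reducible: coloring $G\setminus\{a,b\}$ by minimality and extending greedily gives $a$ at most $k+1$ constraints, but then $b$ faces up to $(k-1)+1+1+1=k+2$ constraints, and no local recoloring trick repairs this (indeed the paper never forbids this configuration). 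So these vertices must be paid for by discharging, as you say. But now each such $2$-vertex has only one neighbor with positive charge, namely its degree-$k$ neighbor, and a degree-$17$ vertex has only $17-3=14$ units to spare while it may have up to $17$ such neighbors, each needing a full unit. Your claim that ``a $2$-vertex collects $1$ from its two neighbors combined'' fails exactly here, and no rule moving charge a bounded distance can fix it: the deficit must be covered by vertices arbitrarily far away. This is why the paper uses a \emph{global} discharging rule $R_g$, in which every ``positive'' vertex pays $\frac{2}{5}$ into a common pot and every ``negative'' vertex (such as $a$ and $b$ above) draws $\frac{1}{5}$ from it.

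Making the pot solvent is the real content of the paper and is entirely absent from your proposal. One must show that in the subgraph $H(G)$ of edges incident to support vertices, every connected component $C$ satisfies $p(C)\geq\lceil n(C)/2\rceil$, where $p$ and $n$ count positive and negative vertices. This rests on a sequence of structural lemmas (Lemmas~\ref{lem:supportpositive}--\ref{lem:cactus}): every component of $H(G)$ is either a lock or a cactus in which every cycle carries an odd number of support vertices. Proving these lemmas in turn requires a nontrivial coloring step: the conflict graph $\mathcal S$ on the support vertices of a $2$-connected piece is shown to be $2$-connected, neither a clique nor an odd cycle, and is then list-colored using Brooks' theorem (Lemma~\ref{claim:Brooks}) -- the paper explicitly identifies this use of Brooks' lemma inside a global discharging argument as its key novelty, and it is what allows the bound $\Delta\geq 17$ rather than $18$ or $24$. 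Without this global machinery (or a genuinely new substitute for it), the calibration problem you flag in your last paragraph is not merely ``delicate'' but impossible: the local rules you propose cannot simultaneously fund the negative $2$-vertices and keep degree-$17$ vertices solvent.
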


Euler's formula links girth and maximum average degree in the case of
planar graphs.

\begin{lemma}[Folklore]\label{lem:euler} For every planar graph $G$,
$(\mad(G)-2)(g(G)-2)<4$.
\end{lemma}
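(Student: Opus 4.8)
The plan is to reduce the statement to a single well-chosen subgraph and then apply Euler's formula with the girth constraint acting on face lengths. First I would let $H$ be a subgraph of $G$ attaining the maximum, so that $\ad(H)=\mad(G)$ (the maximum exists since $G$ is finite), and I would argue that $H$ may be taken connected: the average degree of a disconnected graph is a weighted average of the average degrees of its components, so some component has average degree at least that of the whole, and by maximality a connected maximizer exists. If $G$ is a forest then $\mad(G)<2$ while $g(G)-2>0$ (reading the girth of an acyclic graph as $+\infty$), so the left-hand side is nonpositive and the inequality holds trivially; hence I may assume $G$ contains a cycle $C$. Then $\ad(C)=2$ shows $\mad(G)\geq 2$, which forces the connected maximizer $H$ to contain a cycle as well, since any forest has average degree strictly below $2$.

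Next I would record the two structural facts about $H$: it is planar, being a subgraph of the planar graph $G$, and every cycle of $H$ is a cycle of $G$, so $g(H)\geq g(G)$. Writing $n=|V(H)|$, $m=|E(H)|$ and letting $f$ be the number of faces of a plane embedding of $H$, Euler's formula gives $n-m+f=2$. The key geometric input is the face-length bound: since $H$ contains a cycle and is connected, every face boundary encloses a cycle of length at least $g(H)\geq g(G)$, so summing the boundary-walk lengths over all faces, with $g:=g(G)$, yields $2m\geq g\cdot f$ and hence $f\leq 2m/g$.

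Combining these, I would substitute the face bound into Euler's formula to obtain
\[
  2 = n-m+f \leq n-m+\frac{2m}{g} = n-\frac{g-2}{g}\,m,
\]
so that $\frac{g-2}{g}\,m \leq n-2 < n$ and therefore $\ad(H)=\frac{2m}{n}<\frac{2g}{g-2}$. Since $\mad(G)=\ad(H)$, this rearranges to
\[
  \mad(G)-2 < \frac{2g}{g-2}-2 = \frac{4}{g-2},
\]
and multiplying by $g-2>0$ gives $(\mad(G)-2)(g(G)-2)<4$, as required.

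I expect the one genuinely delicate point to be the face-length inequality $2m\geq g\cdot f$ when $H$ has bridges or cut vertices, since then a face boundary is a closed walk rather than a simple cycle and an edge of a bridge is traversed twice by the same face. I would handle this by observing that because $H$ is connected and contains a cycle, no face boundary is confined to acyclic parts: each bounded face is separated from infinity by a cycle, and the outer face wraps around an outermost cycle, so every face still has boundary length at least $g$, with bridge traversals only adding extra length. This keeps the bound intact, and everything else is routine manipulation of Euler's formula.
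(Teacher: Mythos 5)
The paper states Lemma~\ref{lem:euler} without proof: it is labelled ``Folklore'' and used only to deduce Theorem~\ref{cor:m6} from Theorem~\ref{thm:main}. So there is no in-paper argument to compare against, and the review reduces to checking your proof, which is the standard Euler-formula argument and is correct. Passing to a connected subgraph $H$ attaining $\mad(G)$, noting $g(H)\geq g(G)$, disposing of the forest case separately, and combining Euler's formula $n-m+f=2$ with the face-length bound $2m\geq g(G)\cdot f$ is exactly the textbook route, and your algebra at the end is right.

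The only step that deserves more care is the one you flag yourself: that every face boundary walk of $H$ has length at least $g(G)$ even when $H$ has bridges. Your justification (``each bounded face is separated from infinity by a cycle, and the outer face wraps around an outermost cycle'') is true but is itself an unproved topological assertion, and as phrased it does not quite explain why the separating cycle lies on the boundary of the face in question. A crisper way to close it: the boundary walk of a face $F$ is a single closed walk (here $H$ is connected), each edge of $H$ is traversed exactly twice in total over all face boundary walks, and a closed walk in a forest must traverse every edge an even number of times. Hence, if the boundary subgraph of $F$ were acyclic, every boundary edge of $F$ would be traversed twice by $F$'s own walk, i.e.\ $F$ would lie on both sides of each of its boundary edges; this forces $F$ to be the entire complement of that tree, so $H$ itself would be a tree, contradicting that $H$ contains a cycle. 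Therefore every face boundary contains a cycle of $H$, of length at least $g(H)\geq g(G)$, and $2m\geq g(G)\cdot f$ holds. With this patch (or simply a citation for this standard face-length fact), your proof is complete.
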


By Lemma~\ref{lem:euler}, Theorem~\ref{thm:main} implies
Theorem~\ref{cor:m6}.

An \emph{injective k-coloring}~\cite{hkss02} of $G$ is a (not
necessarily proper) coloring of the vertices of $G$ with $k$ colors
such that two vertices that have a common neighbor receive distinct
colors. We define $\chi_i(G)$ as the smallest $k$ such that $G$ admits
an injective $k$-coloring.  A $2$-distance $k$-coloring is an
injective $k$-coloring, but the converse is not true. For example, the
cycle of length $5$ can be injective colored with 3 colors. The list
version of this coloring is a \emph{list injective k-coloring} of $G$,
and $\chi_{i,\ell}(G)$ is the smallest $k$ such that $G$ admits a list
injective $k$-coloring.

Some results on 2-distance coloring have their counterpart on
injective coloring with one less color. This is the case of
Theorems~\ref{thm:bi08} and~\ref{thm:bi09}~\cite{bi10,bi11}. The proof
of Theorem~\ref{thm:main} also works with close to no alteration for
list injective coloring, thus yielding a proof that every graph $G$
with $\Delta(G) \geq 17$ and $\mad(G)<3$ admits a list injective
$(\Delta(G)+1)$-coloring.

In Sections~\ref{sect:def} and~\ref{sect:term}, we introduce the method
and terminology. In Sections~\ref{sect:conf} and~\ref{sect:dis}, we
prove Theorem~\ref{thm:main} and its counterpart on injective coloring
by a discharging method.

\section{Method}\label{sect:def}

The discharging method was introduced in the beginning of the
20$^{th}$ century. It has been used to prove the celebrated Four Color
Theorem in \cite{ah77,ahk77}.  A discharging method is said to be
\emph{local} when the weight cannot travel arbitrarily far. Borodin,
Ivanova and Kostochka introduced in~\cite{bik05} the notion of
\emph{global} discharging method, where the weight can travel
arbitrarily far along the graph. 


We prove for induction purposes a slightly stronger version of
Theorem~\ref{thm:main} by relaxing the constraint on the maximum
degree. Namely, we relax it into ``For any $k \geq 17$, every graph
$G$ with $\Delta(G)\leq k$ and $\mad(G)<3$ verifies
$\chi^2_\ell(G)\leq k+2$'' so that the property is closed under
vertex- or edge-deletion.  A graph is \emph{minimal} for a property if
it satisfies this property but none of its subgraphs does.

The first step is to consider a minimal counter-example $G$, and prove
it cannot contain some configurations. To do so, we assume by
contradiction that $G$ contains one of the configurations. We consider
a particular subgraph $H$ of $G$, and color it by minimality (the
maximum average degree of any subgraph of $G$ is bounded by the
maximum average degree of $G$). We show how to extend the coloring of
$H$ to $G$, a contradiction.

The second step is to prove that a graph that does not contain any of
these configurations has a maximum average degree of at least $3$.  To
that purpose, we assign to each vertex its degree as a weight. We
apply discharging rules to redistribute weights along the graph with
conservation of the total weight.  As some configurations are
forbidden, we can then prove that after application of the discharging
rules, every vertex has a final weight of at least $3$.  This implies
that the average degree of the graph is at least $3$, hence the
maximum average degree is at least $3$. So a minimal counter-example
cannot exist.

We finally explain how the same proof holds also for list injective $(\Delta+1)$-coloring.

\section{Terminology}\label{sect:term}

In the figures, we draw in black a vertex that has no other neighbor
than the ones already represented, in white a vertex that might have
other neighbors than the ones represented. White vertices may coincide
with other vertices of the figure. When there is a label inside a
white vertex, it is an indication on the number of neighbors it
has. The label '$i$' means "exactly $i$ neighbors", the label '$i^+$'
(resp. '$i^-$') means that it has at least (resp. at most) $i$
neighbors.

Let $u$ be a vertex.  The \emph{neighborhood} $N(u)$ of $u$ is the set
of vertices that are adjacent to $u$.  Let $d(u)=|N(u)|$ be the
\emph{degree} of $u$. A \emph{$p$-link} $x-a_1-...-a_p-y$, $p \geq 0$,
between $x$ and $y$ is a path between $x$ and $y$ such that
$d(a_1)=...=d(a_p)=2$. When a $p$-link exists between two vertices $x$
and $y$, we say they are \emph{$p$-linked}. If there is a $p$-link $x-a_1-...-a_p-y$ between $x$ and $y$, we say $x$ is \emph{$p$-linked through $a_1$} to $y$. A \emph{partial $2$-distance list coloring} of $G$ is a $2$-distance list-coloring of a subgraph $H$ of $G$.

A vertex is \emph{weak} when it is of degree $3$ and is $1$-linked to
two vertices of degree at most $14$, or twice $1$-linked to a vertex of degree at most $14$ (see Figure~\ref{fig:weak}). 
A weak vertex is represented with a $w$ label inside ($\neg w$ if it is not weak).

\begin{figure}[h] \center
\begin{tikzpicture}[scale=0.95]    \configtikz

\draw (0,0) node[whitenode] (u) {} -- ++(90:1cm) node[blacknode] (x)
[label=right:$x$] {} -- ++(45:1cm) node[blacknode] (u1) {} --
++(45:1cm) node[tnode] (u2) {$14^-$};

\draw (x) -- ++(135:1cm) node[blacknode] (v1) {} -- ++(135:1cm)
node[tnode] (v2) {$14^-$};
 
\end{tikzpicture}
\caption{A weak vertex $x$.}
\label{fig:weak}
\end{figure}
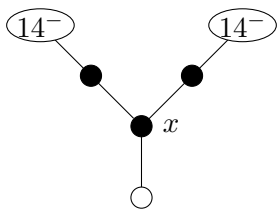

 A vertex is \emph{support} when it is either (see Figure~\ref{fig:neg}): 
\begin{enumerate}[Type ($S_1$): ]
\item a vertex of
degree $2$ adjacent to another vertex of degree $2$;
\item a vertex of degree $2$ that is adjacent to a vertex of degree
  $3$ which is adjacent to a vertex of degree $2$ and to a vertex of
  degree at most $7$;
\item a weak vertex $1$-linked to another weak vertex.
\end{enumerate}

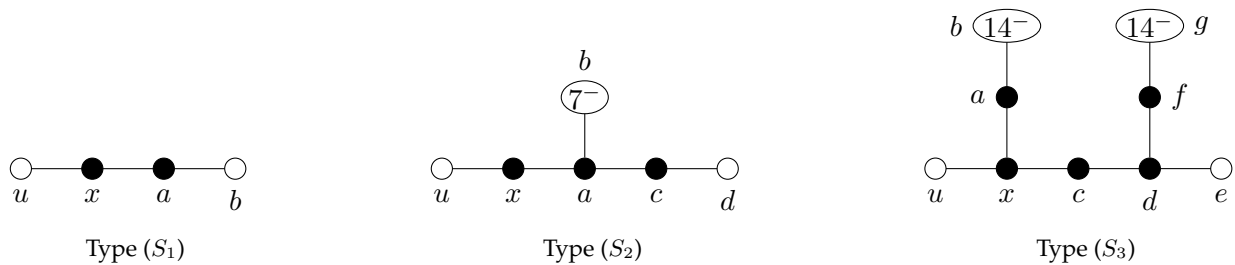
\begin{figure}
\center
\subfigure[Type ($S_1$)]{\begin{tikzpicture}[scale=0.95]   \configtikz
  \draw (0,0) node[whitenode] (u) [label=-90:$u$] {}
  -- ++(0:1cm) node[blacknode] (x) [label=-90:$x$] {}
  -- ++(0:1cm) node[blacknode] (w1) [label=-90:$a$] {}
  -- ++(0:1cm) node[whitenode] (w) [label=-90:$b$] {};
\end{tikzpicture}} \hfill
\subfigure[Type ($S_2$)]{\begin{tikzpicture}[scale=0.95]   \configtikz
  \draw node[whitenode] (u) [label=-90:$u$] {}
  -- ++(0:1cm) node[blacknode] (x) [label=-90:$x$] {}
  -- ++(0:1cm) node[blacknode] (w1) [label=-90:$a$] {}
  -- ++(0:1cm) node[blacknode] (w2) [label=-90:$c$] {}
  -- ++(0:1cm) node[whitenode] (w) [label=-90:$d$] {};
  \draw (w1)
  -- ++(90:1cm) node[tnode] (z) [label=90:$b$] {$7^-$};
\end{tikzpicture}} \hfill
\subfigure[Type ($S_3$)]{\begin{tikzpicture}[scale=0.95]   \configtikz  
  \draw (5,-2) node[whitenode] (u) [label=-90:$u$] {}
  -- ++(0:1cm) node[blacknode] (x) [label=-90:$x$] {}
  -- ++(0:1cm) node[blacknode] (w1) [label=-90:$c$] {}
  -- ++(0:1cm) node[blacknode] (w2) [label=-90:$d$] {}
  -- ++(0:1cm) node[whitenode] (w) [label=-90:$e$] {};
  
  \draw (x)
  -- ++(90:1cm) node[blacknode] (x1) [label=-180:$a$] {}
  -- ++(90:1cm) node[tnode] (x2) [label=-180:$b$] {$14^-$};
  
  \draw (w2)
  -- ++(90:1cm) node[blacknode] (x1) [label=-0:$f$] {}
  -- ++(90:1cm) node[tnode] (x2) [label=-0:$g$] {$14^-$};
\end{tikzpicture}}
\caption{Support vertices $x$.}
\label{fig:neg}
\end{figure}

A vertex is \emph{positive} when it is of degree at least $4$ and is
adjacent to a support vertex. A vertex $u$ is \emph{locked} if it has
two neighbors $v_1$ and $v_2$, where $v_1$ and $v_2$ are both
$1$-linked to the same two vertices $w_1$ and $w_2$ that have a common
neighbor, and $d(v_1)=d(v_2)=d(w_1)=d(w_2)=3$ (see
Figure~\ref{fig:lock}). This configuration is called a \emph{lock}.

\begin{figure}[h] \center

\begin{tikzpicture}[scale=0.95] \configtikz
 
\draw (0,0) node[whitenode] (x) [label=-180:$u$] {} --
++(45:1.4121356cm) node[blacknode] (v1)[label=90:$v_1$] {} -- ++(0:1cm)
node[blacknode] (a) {} -- ++(0:1cm) node[blacknode] (w1)[label=90:$w_1$] {} --
++(-45:1.4121356cm) node[whitenode] [label=0:$x$] (w) {};

\draw (x) -- ++(-45:1.4121356cm) node[blacknode] (v2)[label=-90:$v_2$] {} -- ++(0:1cm)
node[blacknode] (d) {} -- ++(0:1cm) node[blacknode] (w2)[label=-90:$w_2$] {};
 
\draw (v1) -- ++(-45:2cm) node[blacknode] (c) {};
 
\draw (v2) -- ++(45:2cm) node[blacknode] (b) {};
 
\draw (w2) edge node {} (w); \draw (c) edge node {} (w2); \draw (b)
edge node {} (w1);
\end{tikzpicture}
\caption{A locked vertex $u$.}
\label{fig:lock}
\end{figure}
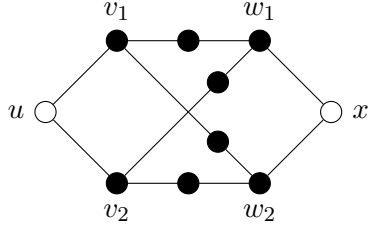

\section{Forbidden Configurations}\label{sect:conf}

In all the paper, $k$ is a constant integer greater than $17$ and $G$
is a minimal graph such that $\Delta(G) \leq k$ and $G$ admits no
$2$-distance $(k+2)$-list-coloring.

We define configurations {($C_1$)} to {($C_{11}$)} (see
Figures~\ref{fig:config1-5},~\ref{fig:config6-9}
  and~\ref{fig:config10-11}). Note that configurations similar to Configurations {($C_1$)}, {($C_2$)} and {($C_4$)} already existed in the litterature, for example in~\cite{dkns08}.
\begin{itemize}
\item {($C_1$)} is a vertex $u$ with $d(u)\leq 1$
\item {($C_2$)} is a vertex $u$ with $d(u)=2$ that has two
  neighbors $v,w$ and $u$ is $1$-linked through $v$ to a vertex of
  degree at most $k-1$.
\item {($C_3$)} is a vertex $u$ with $d(u)=3$ that has three
  neighbors $v,w,x$ with $d(w)+d(x)\leq k-1$, and $u$ is $1$-linked
  through $v$ to a vertex of degree at most $k-1$.
\item {($C_4$)} is a vertex $u$ with $d(u)=3$ that has three
  neighbors $v,w, x$ with $d(w)+d(x) \leq k-1$, and $v$ has exactly three
  neighbors $u, y, z$ with $d(z)\leq 7$ and $d(y)=2$.
\item {($C_5$)} is a vertex $u$ with $d(u)=3$ that has three
  neighbors $v, w, x$ with $d(x)\leq k-1$ and $u$ is
  $1$-linked through $v$ (resp. through $w$) to a vertex of degree
  at most $14$. (Note that $u$ is  weak vertex.)
\item {($C_6$)} is a vertex $u$ with $d(u)=4$ that has four
  neighbors $v, w, x, y$ with $d(w)\leq 7$, $d(x)\leq 3$, $d(y)\leq
  3$, and $u$ is $1$-linked through $v$ to a vertex of degree at most
  $14$.
\item {($C_7$)} is a vertex $u$ with $d(u)=4$ that has four
  neighbors $v, w, x, y$ with $d(x)+d(y)\leq k-1$ and $u$ is
  $1$-linked through $v$ (resp. through $w$) to a vertex of degree at
  most $14$.

\item {($C_8$)} is a vertex $u$ with $d(u)=5$ that has five
  neighbors $v, w, x, y, z$ with $d(w)\leq 7$, $d(x)\leq 3$, $d(y)\leq
  3$, $d(z)=2$, and $u$ is $1$-linked through $v$ to a vertex of
  degree at most $7$.

\item {($C_9$)} is a vertex $u$ with $d(u)=6$ that has six
  neighbors $v, w, x, y, z, t$ with $d(w)\leq 7$, $d(x)\leq 3$,
  $d(y)\leq 3$, $d(z)=2$, $d(t)=2$, and $u$ is $1$-linked through $v$
  to a vertex of degree at most $7$.

\item {($C_{10}$)} is a vertex $u$ with $d(u)=7$ that has seven
  neighbors $v, w_1, \ldots, w_6$ with $d(v)\leq 7$ and $u$ is
  $1$-linked through $w_i$, $1\leq i\leq 6$, to a vertex of degree at
  most $3$.

\item {($C_{11}$)} is a vertex $u$ with $d(u)=k$ that has three
  neighbors $v, w, x$ with $x$ is a support vertex, $v, w$ are both
  $1$-linked to a same vertex $y$ of degree $3$, and $v$ (resp. $w$)
  is $1$-linked to a vertex of degree at most $14$ distinct from
  $y$. (Note that $v, w$ are weak vertices.)
\end{itemize}

\begin{lemma}\label{lem:config} 
  $G$ does not contain  Configurations {($C_1$)} to
  {($C_{11}$)}.
\end{lemma}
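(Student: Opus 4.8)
The plan is to treat every configuration as a \emph{reducible} configuration: assuming $G$ contains it, I delete a small set $S$ of vertices (typically the low-degree ``centre'' of the configuration together with the degree-$2$ internal vertices of its $1$-links), colour $G-S$ by the minimality of $G$, and then extend the colouring to $S$, contradicting the non-colourability of $G$. Every reduction rests on one elementary counting principle: an uncoloured vertex $t$ can be assigned a colour as soon as fewer than $k+2$ colours are forbidden for it, and the number of forbidden colours is at most the number of \emph{already-coloured} vertices lying at distance at most $2$ from $t$. Since the vertices at distance at most $2$ from $t$ number at most $\sum_{s\in N(t)}d(s)$, and since an uncoloured vertex forbids nothing, the whole art of each reduction is to recolour the vertices of $S$ one after another in an order such that, at each step, the vertex being coloured still has an uncoloured neighbour, whose colour is therefore not yet forbidden.

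For $(C_1)$ to $(C_{10})$ this ordering is the entire argument. In each case I delete the centre $u$ together with the degree-$2$ internal vertices of its $1$-links (several for $(C_5)$, $(C_7)$, $(C_{10})$, a single one for the others), colour the remainder by minimality, and recolour $S$ greedily. Colouring $u$ uses the slack hypotheses on its other neighbours (namely $d(w)+d(x)\le k-1$ in $(C_3)$, $(C_4)$, $(C_7)$, or the small prescribed degrees in $(C_6)$, $(C_8)$, $(C_9)$, $(C_{10})$); colouring an internal link vertex $s$ with far endpoint $p$ uses that $d(p)\le 14$ (or $\le 7$), so that $s$ sees at most $d(u)+d(p)$ forbidden colours. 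A direct check shows that, with the order chosen so that the tight step is performed while one neighbour is still uncoloured, every count stays below $k+2$. The binding cases are $(C_6)$, $(C_7)$ and $(C_9)$: there one step reaches exactly $18$ forbidden colours (for instance, colouring $u$ in $(C_9)$ while its degree-$2$ neighbour is uncoloured), so that $k+2\ge 19$, i.e. $k\ge 17$, is exactly what is needed.

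The crux is $(C_{11})$, and it is genuinely different because $u$ has the \emph{maximum} degree $k$. Its $k$ neighbours are pairwise at distance $2$ (they share $u$), hence receive $k$ pairwise-distinct colours and forbid $k$ of the $k+2$ colours; only two colours $\alpha,\beta$ remain, and no single greedy step can be expected to leave one of them free at $u$. This is precisely what the elaborate local structure is designed to overcome: the support vertex $x$ and the two weak vertices $v,w$, both $1$-linked to a common degree-$3$ vertex $y$ and to further vertices $p,q$ of degree at most $14$, form a gadget all of whose vertices have degree at most $3$ except the bounded endpoints $p,q$ and the low-degree neighbourhood of $x$. After deleting $u$ together with the degree-$2$ vertices of the $1$-links at $v$ and $w$ and colouring the rest by minimality, the plan is to \emph{recolour} this gadget so as to liberate one of $\alpha,\beta$ at $u$: since $v,w,y$ and the link vertices have degree at most $3$ and their outer endpoints are of bounded degree, each of them admits several colours, and this freedom should suffice to shift colours through the lock and uncover a free colour for $u$.

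I expect this last step to be the main obstacle, for two reasons. First, freeing a colour at a maximum-degree vertex is a constraint-satisfaction problem rather than a one-shot inequality, so it requires exhibiting an explicit alternating/recolouring scheme (or a Hall-type argument) inside the gadget. Second, the support vertex $x$ comes in three shapes $(S_1)$, $(S_2)$, $(S_3)$, and the recolouring that frees $\alpha$ or $\beta$ must be verified separately for each, while simultaneously respecting the lock that ties $v$ and $w$ through $y$. Checking that in every one of these sub-cases at least one of the two surviving colours can be uncovered at $u$, without ever violating the $2$-distance constraint elsewhere in the gadget, is where the careful and longest part of the proof will lie.
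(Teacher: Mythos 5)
Your treatment of $(C_1)$--$(C_{10})$ is essentially the paper's proof: delete the centre and/or the internal vertices of its $1$-links (the paper sometimes phrases this as deleting a single link vertex or an edge and then \emph{discolouring} the centre, which amounts to the same thing), colour the rest by minimality, and extend greedily in an order chosen so that every constraint count stays at most $k+1$. One small slip: $(C_6)$ is not a binding case (its worst count is $14+3=17$); the counts that actually hit $18$ are in $(C_7)$, $(C_9)$, and inside $(C_{11})$.

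The genuine gap is $(C_{11})$, and it is not merely the missing recolouring scheme that you yourself flag --- the plan is wrong at its root, because you delete $u$. In a $2$-distance colouring, $u$ is constrained by \emph{all} coloured vertices at distance at most $2$, not only by its $k$ neighbours: the $k-3$ neighbours of $u$ lying outside the gadget each have up to $k-1$ further neighbours, all coloured by minimality, all at distance $2$ from $u$, and none of them reachable by any recolouring of $\{v,w,x,y\}$ and the link vertices. Those outside vertices alone can forbid all $k+2$ colours, so no amount of ``liberating $\alpha$ or $\beta$'' inside the gadget can make $u$ colourable; moreover, once $u$ is deleted its neighbours are no longer pairwise at distance $2$, so even your premise that they receive $k$ distinct colours (leaving exactly two free) is unfounded. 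The paper's reduction does the opposite: $u$ is \emph{never} uncoloured. One colours by minimality $G\setminus(\{v,w,x,y,z_1,\ldots,z_4\}\cup A)$, where $A$ consists of the one or two low-degree neighbours of the support vertex $x$ (depending on its type $(S_1)$, $(S_2)$ or $(S_3)$). Then $v$ and $w$ have at most $k-1$ constraints (so at least $3$ free colours each), while $x$ and $y$ have at most $k$ constraints (so at least $2$ free colours each), and a three-way case analysis finishes greedily: either $x$ and $y$ can receive the same colour; or their lists are disjoint, in which case one colours $v$ avoiding the list of $x$, then $y$, $w$, $x$; or neither holds, which forces $x$ and $y$ to be adjacent or to share a neighbour, and the resulting coincidences of vertices leave $y$ with at most one constraint. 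No alternating-path or Hall-type argument is needed anywhere, and the role of the support-vertex case distinction is only to define $A$ and to enumerate the possible coincidences --- not, as you anticipate, to verify a recolouring procedure. This reduction, keeping $u$ coloured and exploiting that every vertex of the gadget sees at least one uncoloured neighbour of $u$, is the idea your proposal is missing.
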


\begin{proof}
  Given a partial 2-distance list-coloring of $G$, a \emph{constraint}
  of a vertex $u$ is color appearing on a vertex at distance at most
  $2$ from $u$ in $G$.  

  Notation refers to Figures~\ref{fig:config1-5},~\ref{fig:config6-9}
  and~\ref{fig:config10-11}. 

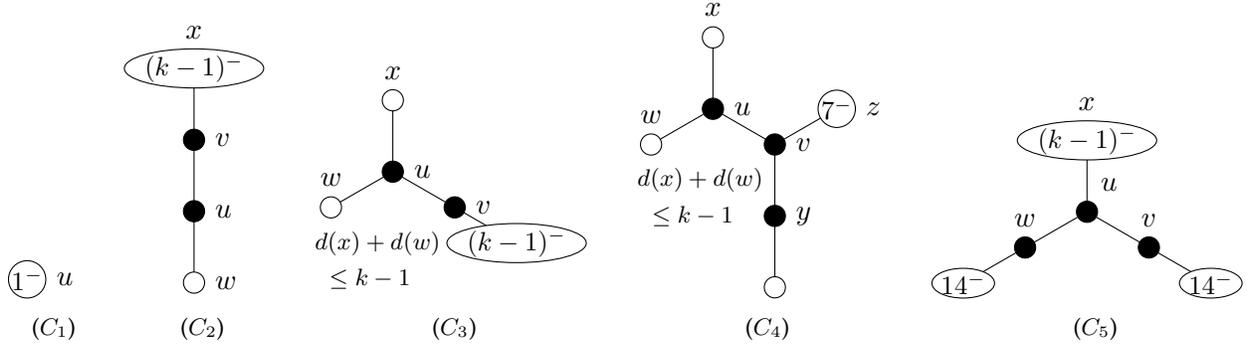
\begin{figure}
  \subfigure[($C_1$)]{\begin{tikzpicture}[scale=0.95] \configtikz
   \draw node[whitenode] (u) [label=-0:$u$] {\small{$1^-$}};
\end{tikzpicture}} \hfill
  \subfigure[($C_2$)]{\begin{tikzpicture}[scale=0.95] \configtikz
\draw node[tnode] (w1) [label=90:$x$]
{\small{$(k-1)^-$}} -- ++(-90:1cm) node[blacknode] (u1)
[label=right:$v$] {} -- ++(-90:1cm) node[blacknode] (u2)
[label=right:$u$] {} -- ++(-90:1cm) node[whitenode] (w2)
[label=right:$w$] {};
\end{tikzpicture}} \hfill
  \subfigure[($C_3$)]{\begin{tikzpicture}[scale=0.95] \configtikz
\node[texte, right=1pt] at (-3,0) {\footnotesize $d(x)+d(w)$};
\node[texte, right=1pt] at (-2.8,-0.5) {\footnotesize $\leq k-1$};

\draw node[tnode] (w1) 
{\small{$(k-1)^-$}} -- ++(150:1cm) node[blacknode] (u1)
[label=right:$v$] {} -- ++(150:1cm) node[blacknode] (u2)
[label=right:$u$] {} -- ++(90:1cm) node[whitenode] (w2)
[label=90:$x$] {};
\draw (u2) -- ++(-150:1cm) node[whitenode] (y3) [label=90:$w$] {};
\end{tikzpicture}} \hfill
  \subfigure[($C_4$)]{\begin{tikzpicture}[scale=0.95] \configtikz
\draw node[tnode] (w1) 
{} -- ++(90:1cm) node[blacknode] (w3) [label=right:$y$] {} -- ++(90:1cm) node[blacknode]
(u1) [label=right:$v$] {} -- ++(150:1cm) node[blacknode] (u2)
[label=right:$u$] {} -- ++(90:1cm) node[tnode] (w2) [label=90:$x$]
{};

\draw (u2) -- ++(-150:1cm) node[whitenode] (y3) [label=90:$w$] {};

\draw (u1) -- ++(30:1cm) node[whitenode] (y3) [label=right:$z$] {\small{$7^-$}};

\node[texte, right=1pt] at (-2.1,1.5) {\footnotesize $d(x)+d(w)$};
\node[texte, right=1pt] at (-1.9,1.0) {\footnotesize $\leq k-1$};

\end{tikzpicture}} \hfill
  \subfigure[($C_5$)]{\begin{tikzpicture}[scale=0.95] \configtikz
\draw node[tnode] (w1) 
{\small{$14^-$}} -- ++(150:1cm) node[blacknode] (u1) [label=90:$v$]
{} -- ++(150:1cm) node[blacknode] (u2) [label=70:$u$] {} -- ++(90:1cm)
node[tnode] (w2) [label=90:$x$] {\small{$(k-1)^-$}};

\draw (u2) -- ++(-150:1cm) node[blacknode] (b2) [label=90:$w$] {} --
++(-150:1cm) node[tnode] (y3) {\small{$14^-$}}; 
\end{tikzpicture}} 
  \caption{Forbidden configurations ($C_1$) to ($C_5$).}
  \label{fig:config1-5}
\end{figure}

\begin{claim} $G$ does not contain {($C_1$)}.
\end{claim}
\begin{proof} Suppose by contradiction that $G$ contains ($C_1$). Using
  the minimality of $G$, we color $G \setminus \{u\}$. Since
  $\Delta(G) \leq k$, and $d(u) \leq 1$, vertex $u$ has at most $k$
  constraints (one for its neighbor and at most $k-1$ for the vertices
  at distance $2$ from $u$). There are $k+2$ colors available in the
  list of $u$, so the coloring of $G \setminus \{u\}$ can be extended
  to $G$, a contradiction.
\end{proof}

\begin{claim} $G$ does not contain {($C_2$)}.
\end{claim}
\begin{proof} Suppose by contradiction that $G$ contains ($C_2$). Using
  the minimality of $G$, we color $G \setminus \{u,v\}$. Vertex $u$
  has at most $k+1$ constraints. Hence we can color $u$. Then $v$ has
  at most $k-1+2=k+1$ constraints. Hence we can color $v$. So we can
  extend the coloring to $G$, a contradiction.
\end{proof}

\begin{claim} $G$ does not contain {($C_3$)}.
\end{claim}
\begin{proof} 
  Suppose by contradiction that $G$ contains ($C_3$).  Using the
  minimality of $G$, we color $G \setminus \{v\}$. Because of $u$, vertices $w$
  and $x$ have different colors.  We discolor $u$.  Vertex $v$ has at
  most $k-1+2=k+1$ constraints. Hence we can color $v$.  Vertex $u$
  has at most $d(w)+d(x)+2 \leq k+1$ constraints. Hence we can color
  $u$. So we can extend the coloring to $G$, a contradiction.
\end{proof}

\begin{claim} $G$ does not contain {($C_4$)}.
\end{claim}
\begin{proof} Suppose by contradiction that $G$ contains ($C_4$). Let
  $e$ be the edge $uv$. Using the minimality of $G$, we color $G
  \setminus \{e\}$. We discolor $u$ and $v$. Vertex $u$ has at most
  $d(w)+d(x)+2\leq k+1$ constraints. Hence we can color $u$.  Vertex
  $v$ has at most $7+3+2 \leq k+1$ constraints. Hence we can color
  $v$. So we can extend the coloring to $G$, a contradiction.
\end{proof}

\begin{claim} $G$ does not contain {($C_5$)}.
\end{claim}
\begin{proof} Suppose by contradiction that $G$ contains ($C_5$). Using
  the minimality of $G$, we color $G \setminus \{u,v,w\}$. Vertex $u$
  has at most $k-1+2=k+1$ constraints. Hence we can color $u$.
  Vertices $v$ and $w$ have at most $14+3 \leq k+1$ constraints
  respectively. Hence we can color $v$ and $w$. So we can extend the
  coloring to $G$, a contradiction.
\end{proof}

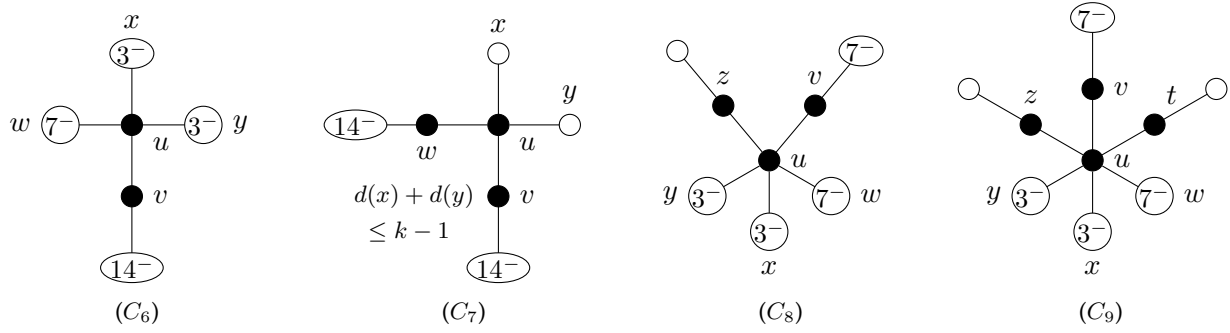
\begin{figure}
  \subfigure[($C_6$)]{\begin{tikzpicture}[scale=0.95] \configtikz
\draw node[tnode] (v) [label=90:$x$] {\small{$3^-$}} -- ++(-90:1cm)
node[blacknode] (x) [label=-15:$u$] {} -- ++(-90:1cm) node[blacknode]
(y) [label=right:$v$] {} -- ++(-90:1cm) node[tnode] (u) {\small{$14^-$}};

\draw (x) -- ++(0:1cm) node[whitenode] (y3b) [label=right:$y$] {\small{$3^-$}};
 
\draw (x) -- ++(180:1cm) node[whitenode] (y4b) [label=left:$w$] {\small{$7^-$}};

\end{tikzpicture}} \hfill
  \subfigure[($C_7$)]{\begin{tikzpicture}[scale=0.95] \configtikz
\draw  node[whitenode] (v) [label=90:$x$] {} --
++(-90:1cm) node[blacknode] (x) [label=-15:$u$] {} -- ++(-90:1cm)
node[blacknode] (y) [label=right:$v$] {} -- ++(-90:1cm) node[tnode]
(u) {\small{$14^-$}}; 

\draw (x) -- ++(0:1cm) node[whitenode] (y3b) [label=90:$y$] {};

\draw (x) -- ++(180:1cm) node[blacknode] (y4bc) [label=-90:$w$] {} --
++(180:1cm) node[tnode] (y4b) {\small{$14^-$}};

\node[texte, right=1pt] at (-2.2,-2) {\footnotesize $d(x)+d(y)$};
\node[texte, right=1pt] at (-2,-2.5) {\footnotesize $\leq k-1$};
\end{tikzpicture}} \hfill
  \subfigure[($C_8$)]{\begin{tikzpicture}[scale=0.95] \configtikz
\draw node[tnode] (v) {\small{$7^-$}} -- ++(-130:1cm) 
node[blacknode] (z) [label=90:$v$] {} -- ++(-130:1cm) 
node[blacknode] (x) [label=right:$u$] {};

\draw (x) -- ++(130:1cm) node[blacknode] (y3) [label=90:$z$] {} -- ++(130:1cm)
node[whitenode] (y3b) {};

\draw (x) -- ++(-30:1cm) node[whitenode] (y4b) [label=0:$w$] {\small{$7^-$}};

\draw (x) -- ++(-90:1cm) node[whitenode] (y2b) [label=-90:$x$] {\small{$3^-$}};

\draw (x) -- ++(-150:1cm) node[whitenode] (y4b) [label=-180:$y$] {\small{$3^-$}};

\end{tikzpicture}} \hfill
  \subfigure[($C_9$)]{\begin{tikzpicture}[scale=0.95] \configtikz
\draw node[tnode] (v) {\small{$7^-$}} -- ++(-90:1cm) 
node[blacknode] (z) [label=right:$v$] {} -- ++(-90:1cm) 
node[blacknode] (x) [label=right:$u$] {} -- ++(30:1cm) 
node[blacknode] (y) [label=80:$t$] {} -- ++(30:1cm) node[tnode] (u) {};

\draw (x) -- ++(150:1cm) node[blacknode] (y3) [label=90:$z$] {} -- ++(150:1cm)
node[whitenode] (y3b) {};

\draw (x) -- ++(-30:1cm) node[whitenode] (y4b) [label=0:$w$] {\small{$7^-$}};

\draw (x) -- ++(-90:1cm) node[whitenode] (y2b) [label=-90:$x$] {\small{$3^-$}};

\draw (x) -- ++(-150:1cm) node[whitenode] (y4b) [label=-180:$y$] {\small{$3^-$}};

\end{tikzpicture}} \hfill
  \caption{Forbidden configurations ($C_6$) to ($C_9$).}
  \label{fig:config6-9}
\end{figure}

\begin{claim} $G$ does not contain {($C_6$)}.
\end{claim}
\begin{proof} Suppose by contradiction that $G$ contains ($C_6$). Using
  the minimality of $G$, we color $G \setminus \{v\}$. We discolor
  $u$. Vertex $v$ has at most $14+3 \leq k+1$ constraints. Hence we
  can color $v$. Vertex $u$ has at most $2+3+3+7 \leq k+1$
  constraints. Hence we can color $u$. So we can extend the coloring
  to $G$, a contradiction.
\end{proof}

\begin{claim} $G$ does not contain {($C_7$)}.
\end{claim}
\begin{proof} Suppose by contradiction that $G$ contains ($C_7$). Using
  the minimality of $G$, we color $G \setminus \{v,w\}$. We discolor
  $u$. Vertex $u$ has at most $d(x)+d(y)+2 \leq k+1$
  constraints. Hence we can color $u$. Vertices $v$ and $w$ have at
  most $14+4 \leq k+1$ constraints respectively. Hence we can color
  $v$ and $w$. So we can extend the coloring to $G$, a contradiction.
\end{proof}

\begin{claim} $G$ does not contain {($C_8$)}.
\end{claim}
\begin{proof} Suppose by contradiction that $G$ contains ($C_8$). Using
  the minimality of $G$, we color $G \setminus \{v\}$. We discolor
  $u$. Vertex $u$ has at most $7+3+3+2+1 \leq k+1$ constraints.  Hence
  we can color $u$. Vertex $v$ has at most $7+5 \leq k+1$
  constraints. Hence we can color $v$. So we can extend the coloring
  to $G$, a contradiction.
\end{proof}

\begin{claim} $G$ does not contain {($C_9$)}.
\end{claim}
\begin{proof} Suppose by contradiction that $G$ contains ($C_9$). Using
  the minimality of $G$, we color $G \setminus \{v\}$. We discolor
  $u$. Vertex $u$ has at most $7+3+3+2+2+1 \leq k+1$ constraints.
  Hence we can color $u$. Vertex $v$ has at most $7+6 \leq k+1$
  constraints. Hence we can color $v$. So we can extend the coloring
  to $G$, a contradiction.
\end{proof}

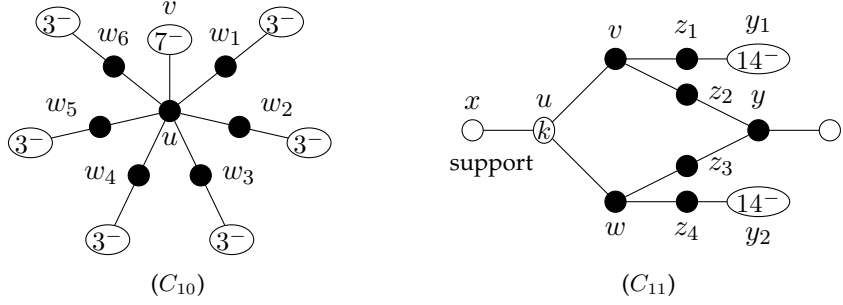
\begin{figure} \centering 
  \subfigure[($C_{10}$)]{\begin{tikzpicture}[scale=0.95] \configtikz
\draw node[tnode] (v) [label=90:$v$] {\small{$7^-$}} -- ++(-90:1cm)
node[blacknode] (z) [label=-90:$u$] {} -- ++(38.58:1cm)
node[blacknode] (x) [label=90:$w_1$] {} -- ++(38.58:1cm)
node[tnode] (u) {\small{$3^-$}};

\draw (z) -- ++(-12.86:1cm) node[blacknode] (x2) [label=5:$w_2$]
{} -- ++(-12.86:1cm) node[tnode] (u2) {\small{$3^-$}};

\draw (z) -- ++(-64.29:1cm) node[blacknode] (x3) [label=right:$w_3$]
{} -- ++(-64.29:1cm) node[tnode] (u3) 
{\small{$3^-$}};

\draw (z) -- ++(-115.71:1cm) node[blacknode] (x4) [label=left:$w_4$]
{} -- ++(-115.71:1cm) node[tnode] (u4) {\small{$3^-$}};

\draw (z) -- ++(-167.14:1cm) node[blacknode] (x5) [label=175:$w_5$]
{} -- ++(-167.14:1cm) node[tnode] (u5) {\small{$3^-$}};

\draw (z) -- ++(-218.57:1cm) node[blacknode] (x6) [label=90:$w_6$]
{} -- ++(-218.57:1cm) node[tnode] (u6) {\small{$3^-$}};

\end{tikzpicture}} $\qquad\quad$
  \subfigure[($C_{11}$)]{\begin{tikzpicture}[scale=0.95] \configtikz

\draw node[whitenode] (v3) [label=90:$x$] {} -- ++(0:1cm)
node[tnode] (u) [label=90:$u$] {\small{$k$}} -- ++(45:1.41421356cm)
node[blacknode] (v1) [label=90:$v$] {} -- ++(0:1cm) 
node[blacknode] (v1b)[label=90:$z_1$] {} -- ++(0:1cm) 
node[tnode] (v1c) [label=90:$y_1$] {\small{$14^-$}}; 
 
\draw (u) -- ++(-45:1.41421356cm) 
node[blacknode] (v2) [label=-90:$w$] {} -- ++(0:1cm)
node[blacknode] (v2b) [label=-90:$z_4$] {} -- ++(0:1cm) 
node[tnode] (v2c) [label=-90:$y_2$] {\small{$14^-$}};
 
\draw (v1) -- ++(-26.56505:1.11803cm) 
node[blacknode] (v1d) [label=0:$z_2$] {} -- ++(-26.56505:1.11803cm) 
node[blacknode] (w) [label=90:$y$] {} -- ++(0:1cm) node[whitenode] (w1) {};
  
\draw (w) -- ++(-153.4349:1.11803cm) node[blacknode] (v2d) [label=0:$z_3$] {};
  
\draw (v2) edge node {} (v2d);
  
\node[texte, right=1pt] at (-0.5,-0.50) {\footnotesize support};

\end{tikzpicture}} 
  \caption{Forbidden configurations ($C_{10}$) and ($C_{11}$).}
  \label{fig:config10-11}
\end{figure}

\begin{claim} $G$ does not contain {($C_{10}$)}.
\end{claim}
\begin{proof} Suppose by contradiction that $G$ contains
  ($C_{10}$). Using the minimality of $G$, we color $G \setminus
  \{u,w_1,\ldots,w_6\}$. Vertex $u$ has at most $7+6 \leq k+1$
  constraints. Hence we can color $v$. Vertices $w_i$ have at most
  $3+7 \leq k+1$ constraints. Hence we can color $w_1,...,w_6$. So we
  can extend the coloring to $G$, a contradiction.
\end{proof}

\begin{claim} $G$ does not contain {($C_{11}$)}.
\end{claim}
\begin{proof} Suppose by contradiction that $G$ contains ($C_{11}$).
  Since $x$ is a support vertex, and $u$ is of degree $k$, it is of type
  ($S_1$), ($S_2$) or ($S_3$) of support vertices with the notation
  of Figure~\ref{fig:neg}.  Note that some vertices may coincide
  between Figure~\ref{fig:neg} and Figure~\ref{fig:config10-11}.










We define a set of vertices $A$ as follows:

$$A=\left\{ \begin{array}{lr} 
    \{a\} & \text{if }x\text{ is of Type (}S_{1}\text{)} \\
    \{a,c\} & \text{if }x\text{ is of Type (}S_{2}\text{)} \\
    \{a,c\} & \text{if }x\text{ is of Type  (}S_{3}\text{)} 
 \end{array}  \right. $$

Using the minimality of $G$, we color $G \setminus (\{v,w,x,y,$
$z_1,\ldots,z_4\} \cup A)$.  If $x$ is of Type ($S_1$)
(resp. ($S_2$)), $a$ (resp. $c$) has at most $k+1$ constraints, hence we can color
it.  For the three types ($S_i$), $x$ has at most $k-3+1+2 = k$
constraints, thus it has at least $2$ available colors.  Vertex $y$
has at most $k$ constraints, thus it has at least $2$ available
colors.  Both $v$ and $w$ have at most $k-3+1+1\leq k-1$ constraints,
so they have at least $3$ available colors in their list.

We now explain how to color $v,w,x,y$ (other uncolored vertices will
be colored after).  Suppose $x$ and $y$ can be assigned the same
color, then both $v$ and $w$ have at least $2$ available colors and
thus can be colored.

Suppose the lists of available colors of $x$ and $y$ are disjoint.  We
color $v$ with a color not appearing in the list of $x$. Then we color
$y$ that has $k+1$ constraints. (Vertex $x$ has still at least $2$
available colors.) Then we color $w$ that has $k+1$ constraints and
finally $x$. 

Now we assume that we cannot assign the same color to $x$ and $y$ and
that their lists of available colors are not disjoint. This means that
$x$ and $y$ are either adjacent or have a common neighbor. So some
vertices coincide between Figure~\ref{fig:neg} and
Figure~\ref{fig:config10-11}. The different cases where $x$ and $y$ are either
adjacent or have a common neighbor are the following:

  \begin{itemize}

  \item[($S_1$)] 

    \begin{itemize}
    \item $b=y$
    \end{itemize}

\item[($S_2$)]
  \begin{itemize}
 \item $b=y$
 \item $a=y$ and w.l.o.g $b=z_2$, $c=z_3$ and $d=w$.
 \end{itemize}

\item[($S_3$)] 
  \begin{itemize}
  \item $b=y$
  \item $d=y$, and w.l.o.g. $f=z_2$, $g=v$ and $e=z_3$.  
\end{itemize}
\end{itemize}

In all these cases, $y$ has at most $1$ contraint.  So we can color
$x,v,w,y$, in this order as they all have at most $k+1$ constraints
when they are colored.

If $x$ is of Type ($S_2$) (resp. ($S_3$)), vertex $a$ (resp vertices
$a$, $c$) has at most $11$ constraints (resp. $18$, $6$), so we can
color them. The vertices $z_i$ have at most $17\leq k+1$, so we can
color them. Thus the coloring have been extended to $G$, a
contradiction.
\end{proof}

\end{proof}


\section{Structure of support vertices}
\label{sec:structure}

Let $H(G)$ be the subgraph of $G$ induced by the edges incident
to at least a support vertex. 
We prove several properties of support vertices and of the graph
$H(G)$.

\begin{lemma}\label{lem:supportpositive} 
  Each positive vertex is of degree $k$ and each support vertex is
  adjacent to exactly one positive vertex.
\end{lemma}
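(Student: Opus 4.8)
The plan is to reduce both assertions to the single fact that, for every support vertex $x$, its unique neighbour that may carry large degree has degree exactly $k$. The reduction is purely structural. From the definition of a support vertex (Figure~\ref{fig:neg}), in each of the three types all but one neighbour of $x$ are forced to have degree at most $3$: in Types \us and \uss the vertex $x$ has degree $2$, its remaining neighbour being of degree $2$ (Type \us) or the central vertex $a$ of degree $3$ (Type \uss); in Type \usss the vertex $x$ has degree $3$ with two neighbours of degree $2$. Hence $x$ has at most one neighbour of degree at least $4$, namely the white vertex $u$ of Figure~\ref{fig:neg}. This already gives the ``at most one'' half of the second assertion, and it shows that any positive vertex adjacent to $x$---being of degree at least $4$---must be $u$. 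Therefore, once I prove $d(u)=k$, I obtain simultaneously that $u$ is positive (so $x$ has exactly one positive neighbour) and that every positive vertex has degree $k$.

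It remains to prove $d(u)=k$, which I would do type by type, in each case producing a configuration forbidden by Lemma~\ref{lem:config} whenever $d(u)\le k-1$. In Type \us, the remaining neighbour $a$ of $x$ has degree $2$ and, since $d(x)=2$, is $1$-linked through $x$ to $u$; so $a$ realises Configuration ($C_2$) unless $d(u)=k$. In Type \uss, let $a$ be the degree-$3$ neighbour of $x$; its two neighbours other than $x$ have degrees $2$ and at most $7$, summing to at most $9\le k-1$, while $a$ is $1$-linked through $x$ to $u$, so $a$ is the centre of Configuration ($C_3$) unless $d(u)=k$. In Type \usss, the vertex $x$ is itself weak of degree $3$, so, being weak, it is $1$-linked through each of its two degree-$2$ neighbours to a vertex of degree at most $14$; taking $u$ as its third neighbour, $x$ realises Configuration ($C_5$) unless $d(u)=k$. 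In each case Lemma~\ref{lem:config} forbids the configuration, forcing $d(u)=k$.

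The step that needs the most care is the choice, in each type, of the configuration and of the vertex to which it is applied. The naive attempt is to apply a degree-$2$ configuration to $x$ directly, but this only bounds the degree of the vertex on the far side of $x$, not $u$. The right idea is to apply ($C_2$) or ($C_3$) to the low-degree vertex that is $1$-linked \emph{through} $x$ towards $u$---and, in Type \usss, to use $x$ itself as the centre of ($C_5$) with $u$ as its third neighbour---so that $u$ plays the role of the far endpoint whose degree is pushed up to $k$. The remaining verifications are routine: the degree sums stay comfortably below $k-1$ because $k\ge 17$, and any coincidence among the named vertices only adds constraints, so it cannot invalidate the argument.
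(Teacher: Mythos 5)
Your proposal is correct and follows essentially the same route as the paper: the paper's proof likewise combines the observation that, by definition, a support vertex has at most one neighbour of degree at least $4$ with the fact that Configurations ($C_2$), ($C_3$) and ($C_5$) are forbidden, these corresponding respectively to support vertices of Types ($S_1$), ($S_2$) and ($S_3$) whose remaining neighbour $u$ has degree at most $k-1$. Your write-up merely makes explicit (by centering ($C_2$) at $a$, ($C_3$) at $a$, and ($C_5$) at $x$ itself) the correspondence that the paper states in a single line.
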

\begin{proof}
  By Lemma~\ref{lem:config}, $G$ does not contain Configurations
  {($C_2$)}, {($C_3$)} and {($C_5$)}. So a support vertex is adjacent
  to a vertex of degree $k$ (Configurations {($C_2$)}, {($C_3$)} and
  {($C_5$)} correspond respectively to support vertices of Type
  ($S_1$), ($S_2$) and ($S_3$)).  By definition, a support vertex has
  at most one neighbor of degree at least $4$, thus it is adjacent to
  exactly one vertex of degree at least $4$ and this vertex has in
  fact degree $k$.  So all the positive vertices are of degree $k$ and
  a support vertex is adjacent to exactly one positive vertex.
\end{proof}

\begin{lemma}
  \label{lem:cycleimpair}
  Each cycle of $H(G)$ with an odd number of support vertices contains
  a subpath $s_1v_1s_2v_2s_3$ where $s_1,s_2,s_3$ are support
  vertices of type ($S_3$) and $v_1,v_2$ are vertices of degree 2.
\end{lemma}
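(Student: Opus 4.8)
The plan is to walk around such a cycle $C$ and run a parity count on its support vertices. Since every edge of $H(G)$ is incident to a support vertex, the non-support vertices of $C$ form an independent set, so no two of them are consecutive on $C$. First I would record, for each support vertex, its two neighbours on $C$, using that by Lemma~\ref{lem:supportpositive} a support vertex is adjacent to exactly one positive vertex (of degree $k$): a type ($S_1$) vertex (degree $2$) has one positive neighbour and one further degree-$2$ neighbour that is itself of type ($S_1$); a type ($S_2$) vertex (degree $2$) has one positive neighbour and one neighbour $a$ of degree $3$, whose degree-$2$ neighbours are again of type ($S_2$); and a type ($S_3$) vertex (degree $3$, weak) has one positive neighbour and two degree-$2$ neighbours, each realizing a $1$-link to a vertex of degree at most $14$, at least one of which is weak and hence of type ($S_3$). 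A direct check shows that the degree-$2$ vertex realizing a $1$-link between two type ($S_3$) vertices is non-support.

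Call an \emph{($S_3$)-run} a maximal subpath of $C$ made of type ($S_3$) vertices joined consecutively by (non-support) degree-$2$ vertices. The sought subpath $s_1v_1s_2v_2s_3$ exists precisely when some ($S_3$)-run has length at least three, since then its middle vertex uses both of its degree-$2$ edges on $C$. So the goal reduces to producing such a run. I would first show that type ($S_1$) and type ($S_2$) vertices occur on $C$ in pairs. Two consecutive type ($S_1$) vertices are forced to appear together flanked by positive vertices, as each uses both of its edges. For a type ($S_2$) vertex $x$, the cycle is forced through its degree-$3$ neighbour $a$ and must leave $a$ along another support-neighbour of $a$; all such neighbours are of type ($S_2$), the alternative types being excluded because, by Lemma~\ref{lem:supportpositive}, a type ($S_1$) or type ($S_3$) candidate there would have no positive neighbour. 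Hence the number of type ($S_1$) and type ($S_2$) vertices on $C$ is even, so the number of type ($S_3$) vertices has the parity of the total, namely \emph{odd}.

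It remains to deduce, from an odd number of type ($S_3$) vertices, the existence of an ($S_3$)-run of length at least three. Arguing by contradiction, suppose every ($S_3$)-run has length one or two. A run of length two contributes an even amount, so there is an odd (in particular nonzero) number of runs of length one. I would then pin down a run of length one: such a vertex $s$ must traverse $C$ along its positive edge and one degree-$2$ edge, the latter leading to a non-($S_3$) vertex, while its weak $1$-link partner lies off $C$ (indeed, were both degree-$2$ edges of $s$ used, one of them would continue into a type ($S_3$) vertex and extend the run). The plan is to match these length-one runs to one another through the degree-$k$ positive vertices incident to them, the remaining obstructions being ruled out by Lemma~\ref{lem:config}, in particular by the absence of configuration ($C_{11}$) around such a positive vertex; this yields an even count of length-one runs, the desired contradiction.

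I expect this last step to be the crux. Precisely controlling where an ($S_3$)-run is allowed to stop forces a case analysis on the off-cycle $1$-link partners of the run-endpoints and on the support-neighbourhoods of the positive vertices they meet, and it is exactly there that configuration ($C_{11}$) and the degree bounds of Lemma~\ref{lem:supportpositive} must be used to eliminate the parity-breaking configurations. By comparison, the structural facts of the first paragraph and the pairing of type ($S_1$) and type ($S_2$) vertices are routine once the neighbourhood of each support vertex has been fully described.
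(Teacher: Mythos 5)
Your reduction in the first two paragraphs is sound: type ($S_1$) and ($S_2$) vertices do occur along $C$ in adjacent pairs flanked by positive vertices, so the number of type ($S_3$) vertices on $C$ is odd, and the lemma follows once some ($S_3$)-run has length at least three. But the step you yourself call the crux --- disposing of an odd number of length-one runs --- is both missing (it is only a ``plan'') and aimed at a case that cannot occur, for a reason you use elsewhere but fail to apply here. If $s$ is a type ($S_3$) vertex of $C$, at least one of the two edges of $C$ at $s$ goes to a degree-$2$ neighbour $a$; since $d(a)=2$, the next edge $ay$ of $C$ is also an edge of $H(G)$, hence incident to a support vertex. Now $a$ is not a support vertex: its only neighbours are $s$ (degree $3$) and $y$ (degree at most $14$, since $s$ is weak), so $a$ has no neighbour of degree $k$, contradicting Lemma~\ref{lem:supportpositive} if $a$ were support. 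Therefore $y$ is a support vertex; moreover $d(y)\geq 3$ (if $d(y)=2$ then $a$ would be a type ($S_1$) support vertex, again contradicting Lemma~\ref{lem:supportpositive}), so $y$ is of type ($S_3$). In other words, \emph{every} degree-$2$ edge of $C$ at an ($S_3$)-vertex continues into another ($S_3$)-vertex --- not only the edge towards the weak partner guaranteed by the definition, as your parenthetical assumes. Hence every run has length at least two, your assertion that the degree-$2$ edge of a length-one run ``leads to a non-($S_3$) vertex'' describes an impossible situation, and the odd count of ($S_3$)-vertices immediately forces a run of odd length at least $3$. No matching of runs through positive vertices is needed, and configuration $(C_{11})$ (which concerns locks) is not the relevant tool.

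With that one observation your argument closes, and it then constitutes a genuinely different organization from the paper's: the paper cuts $C$ at its positive vertices, uses parity to find a segment with an odd number of support vertices, and shows that any segment containing a type ($S_1$) or ($S_2$) vertex has exactly two support vertices, whereas you apply parity to the types directly and then to the runs. As submitted, however, the proof has a hole exactly at its crux, created by overlooking that the defining property of $H(G)$ applies to the second edge of each $1$-link on the cycle.
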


\begin{proof}
  Let $C$ be cycle of $H(G)$ with an odd number of support vertices.
  Cycle $C$ does not contain just one support vertex, as all its edges
  have to be adjacent to a support vertex (there is no loop nor
  multiple edge in $H(G)$). So $C$ contains at least three support
  vertices.

  Suppose that $C$ contains no positive vertices. Then it contains no
  support vertices of type \us or \uss as such vertices are of degree
  $2$, so all their neighbors would  be on $C$, and they are adjacent to
  a positive vertex by Lemma~\ref{lem:supportpositive}. So $C$
  contains only support vertices of type \usss.  Let $s_1,s_2,s_3$ be
  three support vertices of $C$ appearing consecutively along $C$.  A
  support vertex of Type \usss is of degree $3$, adjacent to two
  vertices of degree $2$ and to a positive vertex. So the neighbors of
  $s_i$ on $C$ are vertices of degree $2$ that are not support
  vertices. As $H(G)$ contains only edges incident to support
  vertices, there exist $v_1,v_2$ of degree $2$ such that
  $s_1v_1s_2v_2s_3$ is a subpath  of $C$.

  Suppose now that $C$ contains some positive vertices.  Let
  $p_1,\ldots,p_\ell$ be the set of positive vertices of $C$ appearing
  in this order along $C$ while walking in a chosen direction
  (subscript are understood modulo $\ell$).  Let $Q_i$, $1\leq i\leq
  \ell$, be the subpath of $C$ between $p_i$ and $p_{i+1}$ (in the
  same choosen direction along $C$). (Note that if $\ell=1$, then $Q_1=C$
  is not really a subpath.) As $C$ contains an odd number of support
  vertices, there exists $i$ such that $Q_i$ contains an odd number of
  support vertices.  If $Q_i$ contains just one support vertex $v$,
  then $Q_i$ has length $2$, since $H(G)$ contains only edges incident
  to support vertices. So $v$ is adjacent to two different positive
  vertices (or has a multiple edge if $\ell=1$), a contradiction to
  Lemma~\ref{lem:supportpositive}. So $Q_i$ contains at least $3$
  support vertices. Let $s_1,s_2,s_3$ be three support vertices of
  $Q_i$ appearing consecutively along $Q_i$.

  If one of the $s_i$ is of Type \us, let $x$ be such a vertex. With
  the notation of Figure~\ref{fig:neg}, vertex $x$ is of degree $2$,
  so its two neighbors $u,a$ are on $C$, with $u$ a positive vertex
  and $a$ a support vertex of Type \us. Then vertex $a$ is of degree
  $2$ so its neighbor $b$ distinct from $x$ is also on $C$. Vertex $b$
  is positive so $Q_i$ is the path $u,x,a,b$ and contains just two
  support vertices, a contradiction.

  If one of the $s_i$ is of Type \uss, let $x$ be such a vertex. With
  the notation of Figure~\ref{fig:neg}, vertex $x$ is of degree $2$,
  so its two neighbors $u,a$ are on $C$, with $u$ a positive vertex
  and $a$ a vertex of degree $3$. Vertex $a$ is not adjacent to
  vertices of degree $k$ so by Lemma~\ref{lem:supportpositive}, it is
  not a support vertex. Let $c'$ be the neighbor of $a$ on $C$ that is
  distinct from $x$. As all the edges of $H(G)$ are incident to
  support vertices, $c'$ is a support vertex. Since $c'$ is adjacent
  to a vertex of degree $3$ it is a support vertex of Type \uss and
  can play the role of $c$ of Figure~\ref{fig:neg}. Then $c$ is of
  degree $2$ and its neighbor on $C$ distinct from $a$ is a positive
  vertex $d$. So $Q_i$ is the path $u,x,a,c,d$ and contains just two
  support vertices, a contradiction.

  So $s_1,s_2,s_3$ are all of Type \usss.  A support vertex of Type
  \usss is of degree $3$, adjacent to two vertices of degree $2$ and
  to a positive vertex. So the neighbors of $s_2$ on $C$ are vertices
  $v_1,v_2$ of degree $2$ that are not support vertices. As $H(G)$
  contains only edges incident to support vertices, we can assume
  w.l.o.g. that $s_1v_1s_2v_2s_3$ is a subpath of $C$.
\end{proof}

  \begin{lemma}
    \label{cl:cycle2}
    $H(G)$ does not contain a 2-connected subgraph of size at least
    three with exactly two support vertices.
  \end{lemma}
  \begin{proof}
    Suppose by contradiction that $H(G)$ contains a 2-connected
    subgraph $C$ of size $\geq 3$ that has exactly two support
    vertices $S=\{s_1,s_2\}$.  We color by minimality $G\setminus
    (S\cup\{v \in N_G(S) | d_G(v)\leq 3\})$. (Note that by
    Lemma~\ref{lem:supportpositive}, the set $\{v \in N_G(S) |
    d_G(v)\leq 3\}$ corresponds to vertex $a$ of Figure~\ref{fig:neg}
    if the support vertex is of Type \us or \uss and to vertices
    $a,c$ if the support vertex is of Type ($S_3$).)  

    We first show how to color $S$.  For that purpose we consider
    three cases corresponding to the type of $s_1$.

  \begin{itemize}
  \item \emph{$s_1$ is of Type ($S_1$).} Then $s_1$ is of degree $2$,
    has a positive neighbor $u$ and a support neighbor $a$ of Type
    ($S_1$).  As $s_1$ is of degree $2$, both its neighbors are in
    $C$. So $a$ is a support vertex of $C$, thus $a=s_2$.
    Then $u$ is of degree $k$, has two neighbors $s_1,s_2$ that are
    not colored, so $s_1$ and $s_2$ have at most $k$ constraints, and
    we can color them.

  \item \emph{$s_1$ is of Type ($S_2$).} Then $s_1$ is of degree $2$,
    has a positive neighbor $u$ and another neighbor $a$ of degree
    $3$. Vertex $a$ is not a support vertex by
    Lemma~\ref{lem:supportpositive} since it has no neighbor of degree
    $k$.  As $s_1$ is of degree $2$, all its neighbors are in $C$.
    Vertices $u$ and $a$ are in $C$ that is 2-connected so they have
    at least two neighbors in $C$. Since they are not support
    vertices, all their neighbors in $C$ are support
    vertices. So both $u$ and $a$ are adjacent to $s_2$.  Vertex $s_2$
    is support, it is adjacent to $a$ that is of degree $3$, so $s_2$
    is of Type ($S_2$).  Then $u$ is of degree $k$, has two neighbors
    $s_1,s_2$ that are not colored, so $s_1$ and $s_2$ have at most
    $k$ constraints, and we can color them.

  \item \emph{$s_1$ is of Type ($S_3$).} Then $s_1$ is of degree $3$,
    has a positive neighbor $u$ and two other neighbors $w,w'$ of
    degree $2$. Vertices $w,w'$ are not support vertices by
    Lemma~\ref{lem:supportpositive} since they have no neighbor of
    degree $k$.  As $s_1$ is of degree $3$, two of $u,w,w'$ are in
    $C$. Let $Y$ be the neighbors of $s_1$ in $C$. We can assume by
    symmetry that either $\{v,w\}\subseteq Y$ or $\{w,w'\}\subseteq
    Y$.  Vertices of $Y$ are in $C$ that is 2-connected so they have
    at least two neighbors in $C$. Since they are not support
    vertices, all their neighbors in $C$ are support vertices. So all
    the vertices of $Y$ are adjacent to $s_2$.  Vertex $s_2$ is a
    support vertex, it is adjacent to $w$ that is non support and of
    degree $2$, so $s_2$ is of Type ($S_3$). In both cases
    ($\{v,w\}\subseteq Y$ or $\{w,w'\}\subseteq Y$), vertices $s_1$
    and $s_2$ have at most $k$ constraints, and we can color them.
\end{itemize}

 Every vertex of $\{v \in N_G(S) | d_G(v)\leq 3\}$ has at
most $17$ constraints, hence we can extend the coloring to the whole
graph, a contradiction.
  \end{proof}

\begin{lemma}\label{lem:triangle} 
  Every 2-connected subgraph of $H(G)$ that contains exactly three
  support vertices is a cycle.
\end{lemma}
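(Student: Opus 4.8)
The plan is to deduce the statement from the earlier structural facts about $H(G)$, using that every edge of $H(G)$ is by definition incident to a support vertex, together with Lemma~\ref{cl:cycle2}. First I would record the key numerical fact: \emph{every cycle of $H(G)$ contains at least three support vertices}. Indeed, such a cycle $C$ has length at least $3$ and each of its edges is incident to a support vertex; if $C$ had a single support vertex $s$, then (since only two edges of $C$ meet $s$) some edge of $C$ would be incident to neither a vertex equal to $s$ nor any other support vertex, a contradiction, so $C$ has at least two support vertices. As $C$ is a $2$-connected subgraph of $H(G)$ of size at least three, Lemma~\ref{cl:cycle2} forbids it from having \emph{exactly} two support vertices; hence $C$ has at least three.

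Now let $B$ be a $2$-connected subgraph of $H(G)$ with exactly three support vertices (so $B$ has at least three vertices), and suppose for contradiction that $B$ is not a cycle. A $2$-connected graph on at least three vertices that is not a cycle contains a \emph{theta subgraph}: two branch vertices $x,y$ joined by three internally disjoint paths $P_1,P_2,P_3$. Since $H(G)$ is simple, at most one of these paths is a single edge, so any two of them form a cycle of length at least $3$. By the key fact above, each of the three cycles $P_1\cup P_2$, $P_1\cup P_3$ and $P_2\cup P_3$ contains at least three support vertices. But all support vertices of $B$, and hence of any of its cycles, lie in one fixed set of only three support vertices; therefore each of these three cycles must contain \emph{all three} of them.

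To finish, I would observe that this is impossible. If some support vertex $s$ were internal to a path, say to $P_1$, then $s$ would not lie on the cycle $P_2\cup P_3$ (the three paths being internally disjoint), contradicting that this cycle contains all three support vertices. Hence no support vertex is internal to any $P_i$, so every support vertex is a branch vertex and lies in $\{x,y\}$. This yields at most two support vertices, contradicting that $B$ has exactly three. Therefore $B$ must be a cycle.

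The step I expect to be the crux is the first one: promoting the trivial ``at least one'' and the edge-incidence ``at least two'' up to ``at least three'' support vertices per cycle via Lemma~\ref{cl:cycle2}. Once this counting bound is in hand, the theta decomposition converts the exact count of three support vertices into an immediate pigeonhole contradiction between the three pairwise cycles and the two available branch vertices, so no delicate case analysis on the types $(S_1),(S_2),(S_3)$ is needed. The only mild point I would verify carefully is that the theta's paths genuinely yield cycles of length at least $3$ (guaranteed by the simplicity of $H(G)$, which prevents two of the paths from both being single edges), so that Lemma~\ref{cl:cycle2} applies to each of them.
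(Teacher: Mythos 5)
Your proof is correct, and it takes a genuinely different route from the paper's. The paper proceeds structurally: it first uses Menger's theorem to find, inside the 2-connected subgraph $C$, a cycle $C'$ through all three support vertices (any cycle missing one would contain exactly two, contradicting Lemma~\ref{cl:cycle2}); it then invokes Lemma~\ref{lem:cycleimpair} to conclude the three support vertices are of Type ($S_3$), pins $C'$ down as a specific induced $6$-cycle $x_1v_1x_2v_2x_3y$ using the degree structure of Type-($S_3$) vertices, and finally shows any further vertex of $C$ would create a cycle with exactly two support vertices. Your argument dispenses with all of the type analysis (Lemmas~\ref{lem:supportpositive} and~\ref{lem:cycleimpair} are never needed): the counting fact that every cycle of $H(G)$ has at least three support vertices follows from just the definition of $H(G)$ (every edge incident to a support vertex) plus Lemma~\ref{cl:cycle2}, and then the theta-subgraph characterization of non-cycle $2$-connected graphs yields an immediate pigeonhole contradiction, since each of the three pairwise cycles would have to carry all three support vertices while internal disjointness confines them to the two branch vertices. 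Your version is shorter, purely combinatorial, and more general --- it works for any graph in which every edge meets a ``marked'' vertex and no $2$-connected subgraph of size at least three has exactly two marked vertices --- whereas the paper's proof extracts extra structural information (the explicit $6$-cycle of Type-($S_3$) supports) that, as it happens, is not used elsewhere. Both proofs hinge on Lemma~\ref{cl:cycle2}, and since that lemma and the theta/ear-decomposition fact are independent of the statement being proved, there is no circularity in your approach.
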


\begin{proof}
  Suppose by contradiction that $H(G)$ contains a 2-connected subgraph
  $C$ of size $\geq 3$ that has exactly three support vertices
  $S=\{s_1,s_2,s_3\}$ and that is not a cycle.

  Suppose by contradiction that $C$ contains no cycle $C'$ with
  $S\subseteq C'\subseteq C$.  As $C$ is 2-connected, by Menger's
  Theorem there exist two internally vertex-disjoint paths $Q,Q'$
  between $s_1,s_2$. Let $C''$ be the cycle $Q\cup Q'$. By assumption
  $C''$ does not contain $s_3$. So it contains just two support
  vertices, a contradiction to Lemma~\ref{cl:cycle2}.  So $C$
  contains a cycle $C'$ with $S\subseteq C'\subseteq C$.

  By Lemma~\ref{lem:cycleimpair}, cycle $C'$ contains a subpath
  $x_1v_1x_2v_2x_3$ where $x_1,x_2,x_3$ are support vertices of Type
  ($S_3$) and $v_1,v_2$ are vertices of degree 2. As $C$ contains just
  three support vertices, we have $S=\{x_1,x_2,x_3\}$. Vertices
  $x_1,x_3$ are support vertices of Type \usss, they are of degree $3$
  and only adjacent to positive vertices and to vertices of degree $2$
  so they are not adjacent.  The graph $H(G)$ contains only edges
  incident to support vertices, so there exists a vertex $y$ of $C'$
  adjacent to $x_1,x_3$, and $x_1v_1x_2v_2x_3y$ is the cycle
  $C'$. 
If $C'$ has some chords in $H(G)$, then $H(G)$ contains a cycle
  with two support vertices only, a contradiction to
  Lemma~\ref{cl:cycle2}. So $C'$ is an induced cycle of $H(G)$ and so
  $C'$ has strictly less vertices than $C$. Let $y'$ be a vertex of $C$
  distinct from $x_1,v_1,x_2,v_2,x_3,y$. Vertex $y'$ is not a support
  vertex, $C$ is 2-connected and $H(G)$ contains only edges incident
  to support vertices, so $y'$ is adjacent to at least two vertices in
  $S$. Then $H(G)$ contains a cycle with two support vertices only, a
  contradiction to Lemma~\ref{cl:cycle2}.

\end{proof}

We need the following lemma from Brooks~\cite{b41}:

\begin{lemma}[\cite{b41}]
\label{claim:Brooks} If $G$ is a $2$-connected graph
that is neither a clique nor an odd cycle, and $L$ is a list
assignment on the vertices of $G$ such that $\forall u \in
V(G), |L(u)|\geq d(u)$, then $G$ is $L$-colorable.
\end{lemma}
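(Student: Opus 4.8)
This is the list-coloring analogue of Brooks' theorem (degree-choosability, in the sense of Erd\H{o}s, Rubin and Taylor), and I would prove it in that spirit. The plan is to first dispose of the case where some list is strictly larger than its degree, and then to treat the \emph{tight} case $|L(u)|=d(u)$ for all $u$ by a structural reduction. The first reduction rests on a simple greedy fact: if $G$ is connected, $|L(u)|\geq d(u)$ for every $u$, and some vertex $w$ satisfies $|L(w)|>d(w)$, then $G$ is $L$-colorable. To see this, root a spanning tree of $G$ at $w$ and color greedily in order of decreasing distance from $w$ (leaves first, $w$ last): when a vertex $v\neq w$ is colored its parent is still uncolored, so at most $d(v)-1<|L(v)|$ of its neighbors are colored and a free color remains; and $w$ is colored last with at most $d(w)<|L(w)|$ forbidden colors. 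Hence I may assume from now on that $|L(u)|=d(u)$ for every vertex $u$.

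For the structural step, note that $G$ is neither a complete graph nor a cycle: odd cycles are excluded by hypothesis, and if $G$ is an even cycle then every list has size exactly $2$ and $G$ is directly $2$-choosable. In the remaining case I would invoke the classical lemma of Lov\'asz: a $2$-connected graph that is neither complete nor a cycle contains three vertices $x,y,z$ with $xy,yz\in E(G)$, $xz\notin E(G)$, such that $G-\{x,z\}$ is connected. (This is proved by choosing $x,z$ from the configuration around a suitable non-cut vertex, e.g.\ via a depth-first search tree.) The vertex $y$ is then a common neighbor of the non-adjacent pair $x,z$, and deleting $x$ and $z$ leaves a connected graph on which I intend to recurse.

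If $L(x)\cap L(z)\neq\emptyset$, the coloring follows quickly. I color both $x$ and $z$ with a common color $\alpha$ and delete them, obtaining the connected graph $G'=G-\{x,z\}$ with updated lists $L'$. The common neighbor $y$ loses two neighbors but only the single color $\alpha$, so in $G'$ its list strictly exceeds its degree; every other vertex loses at least as many neighbors as colors, hence $|L'(u)|\geq d_{G'}(u)$ throughout, with strict inequality at $y$. The greedy fact above then colors $G'$, completing an $L$-coloring of $G$.

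The hard case is $L(x)\cap L(z)=\emptyset$, and I expect this to be the main obstacle: coloring $x$ and $z$ now necessarily spends two \emph{distinct} colors at $y$, so the surplus that drove the previous argument is destroyed, and no alternative choice of $x,y,z$ repairs it directly. I would resolve this by induction on $|V(G)|$, in the manner of Erd\H{o}s--Rubin--Taylor: color $x$ and $z$ arbitrarily and recurse on the connected graph $G-\{x,z\}$. If it is $L'$-colorable we are done; otherwise the inductive hypothesis forces $G-\{x,z\}$ to be a \emph{Gallai tree}—every block a complete graph or an odd cycle—with a correspondingly rigid list structure (tight lists, with prescribed two-element lists around the odd-cycle blocks). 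The concluding step is to show that this rigidity cannot coexist with the $2$-connectivity of $G$ and the freedom to recolor $x$ (or $z$) from the other colors of its list: some such recoloring must free a color at a neighbor, breaking the obstruction and yielding the $L$-coloring. Verifying this incompatibility in all block configurations is where the real work lies.
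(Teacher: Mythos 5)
The paper never actually proves this lemma: it is quoted from the literature (it is really the Erd\H{o}s--Rubin--Taylor/Borodin ``degree-choosability'' version of Brooks' theorem rather than anything in Brooks' 1941 paper), so there is no in-paper proof to compare yours against, and your attempt must stand on its own. Up to and including the case $L(x)\cap L(z)\neq\emptyset$ it does: the greedy surplus fact, the dismissal of even cycles, Lov\'asz's lemma producing $x,y,z$ with $xy,yz\in E(G)$, $xz\notin E(G)$ and $G-\{x,z\}$ connected, and the common-color argument are all correct. But the case $L(x)\cap L(z)=\emptyset$ is a genuine gap, as you yourself concede. The induction you sketch does not go through as stated: $G-\{x,z\}$ is only connected, not $2$-connected, so the statement being proved (about $2$-connected graphs) is not available as an inductive hypothesis; you would need the full block version of the theorem (every connected graph that is not a Gallai tree is degree-choosable), and the step you defer --- that a failed recursion forces a Gallai tree whose ``rigid list structure'' is incompatible with $2$-connectivity and recoloring $x$ or $z$ --- is essentially the classification of the bad list assignments on Gallai trees, which is as hard as the theorem itself. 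As written, the proof is incomplete exactly where you say the real work lies.

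The gap has a short fix that avoids Gallai trees entirely. Suppose all lists are tight and $L(x)\cap L(z)=\emptyset$. Both lists are nonempty (minimum degree is at least $2$), so $L(x)\neq L(z)$, and hence the list of the common neighbor $y$ differs from at least one of them, say $L(y)\neq L(x)$ (the other case is symmetric). If there is a color $c\in L(x)\setminus L(y)$, color $x$ with $c$ and delete it: $G-x$ is connected by $2$-connectivity, every remaining vertex still has a list at least as large as its new degree, and $y$ has a strict surplus since it lost the neighbor $x$ but no color; your greedy fact finishes the coloring. If instead $L(x)\subseteq L(y)$, then $L(y)\setminus L(x)\neq\emptyset$; color $y$ with such a color and delete it: now $x$ keeps its full list but loses a neighbor, giving the surplus, and $G-y$ is again connected, so the greedy fact finishes. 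This observation --- two adjacent vertices with distinct lists already force a coloring --- also shows that Lov\'asz's lemma is only genuinely needed when all lists coincide, i.e.\ precisely in the configuration of the classical Brooks argument that you already handle via a common color for $x$ and $z$.
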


\begin{lemma}\label{lem:nest} 
  Every 2-connected subgraph of $H(G)$ of size at least three is
  either a cycle with an odd number of support vertices or a subgraph
  of a lock of $H(G)$.
\end{lemma}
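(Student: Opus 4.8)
The plan is to classify $C$ by the number of its support vertices, handling the small cases with the lemmas already proved and the large case through a list-coloring (Brooks) argument on an auxiliary graph. First I would fix notation: let $C$ be a 2-connected subgraph of $H(G)$ of size at least three and let $S$ be its set of support vertices. Since every edge of $H(G)$ is incident to a support vertex, $C$ has at least one; a single support vertex would force a vertex of degree one in $C$, contradicting 2-connectivity, and Lemma~\ref{cl:cycle2} forbids $|S|=2$. Thus $|S|\geq 3$. If $|S|=3$, Lemma~\ref{lem:triangle} immediately gives that $C$ is a cycle, and three is odd, so $C$ is a cycle with an odd number of support vertices and we are done.

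The core of the argument is the case $|S|\geq 4$. Here I would introduce the auxiliary graph $C^{*}$ on vertex set $S$, with $s\sim s'$ whenever $s$ and $s'$ lie at distance at most two in $G$; a proper list-coloring of $C^{*}$ from the still-available colors is exactly a valid extension of a partial 2-distance coloring to $S$. Mimicking the constraint counts of Lemma~\ref{lem:config} and Lemma~\ref{cl:cycle2}, I would color $G$ minus $S$ minus the degree-at-most-three neighbors of $S$ by minimality, and check that every $s\in S$ then retains a list $L(s)$ with $|L(s)|\geq d_{C^{*}}(s)$; for a type-$(S_3)$ support vertex this count is tight, giving $|L(s)|=j+1=d_{C^{*}}(s)$, where $j$ is the number of support vertices sharing the unique positive neighbor of $s$ (Lemma~\ref{lem:supportpositive}). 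If $C^{*}$ were 2-connected and neither a clique nor an odd cycle, Lemma~\ref{claim:Brooks} would color $S$, after which the removed degree-at-most-three neighbors (each with at most $17\leq k+1$ constraints, exactly as in Lemma~\ref{cl:cycle2}) could be colored, extending the coloring to all of $G$ and contradicting that $G$ is a counterexample. Hence $C^{*}$ is a clique, an odd cycle, or not 2-connected.

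It then remains to translate the shape of $C^{*}$ back to $C$. When $C^{*}$ is an odd cycle, each support vertex conflicts with exactly two others, so the support vertices together with their connecting degree-two (or shared positive) vertices form a single cycle; invoking Lemma~\ref{lem:cycleimpair} and the 2-connectivity of $C$ I would conclude that $C$ is that cycle, with an odd number of support vertices. When $C^{*}$ is a clique $K_m$, I would push the identity $|L(s)|=j+1$: tightness $|L(s)|=d_{C^{*}}(s)=m-1$ forces $j=m-2$ for every $s$, so each positive vertex is shared by exactly $m-2$ support vertices; counting support--positive incidences gives $m=(m-2)P$ with $P$ the number of positive vertices, whence $(m-2)\mid 2$ and $m\in\{3,4\}$. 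The value $m=4$ yields two positive vertices, each shared by a pair of support vertices that are crosswise $1$-linked to the other pair and share a common neighbor, which is precisely a lock, so $C$ is a subgraph of a lock. A cut vertex of $C^{*}$ would contradict the 2-connectivity of $C$ together with Lemma~\ref{cl:cycle2}, ruling out the non-2-connected case.

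The main obstacle is this final translation, and within it the clique case: the whole dichotomy hinges on the exact bookkeeping of available colors, which simultaneously certifies the Brooks hypothesis $|L(s)|\geq d_{C^{*}}(s)$, bounds the clique to $K_4$ via the divisibility $(m-2)\mid 2$, and pins down the precise lock structure. Verifying $|L(s)|\geq d_{C^{*}}(s)$ uniformly across the three types of support vertices, and confirming that $C^{*}$ inherits (or can be reduced to) 2-connectivity so that Lemma~\ref{claim:Brooks} applies, are the technical points demanding the most care.
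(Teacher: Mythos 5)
Your overall architecture matches the paper's: build the conflict graph on the support vertices (the paper's $\mathcal S$, your $C^{*}$), color everything else by minimality, verify $|L(s)|\geq d_{\mathcal S}(s)$, and apply Lemma~\ref{claim:Brooks}, so that the only escapes are clique, odd cycle, or failure of $2$-connectivity; the paper handles the odd-cycle and non-$2$-connected cases essentially as you sketch (though in the odd-cycle case the paper does not invoke Lemma~\ref{lem:cycleimpair} at all: it argues that a non-cycle $C$ would contain a vertex with three neighbors in $C$, forcing either a triangle in $\mathcal S$ or a cycle with two support vertices, contradicting Lemma~\ref{cl:cycle2}). The genuine gap is your clique case. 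You rest it on the ``identity'' $|L(s)|=j+1=d_{C^{*}}(s)$, where $j$ counts support vertices sharing the positive neighbor of $s$. No such identity exists: constraint counting only ever yields \emph{lower} bounds on list sizes (colors of distinct constraint vertices may coincide), so tightness of $|L(s)|$ cannot be converted into an equality about neighborhoods. Worse, the edges of $C^{*}$ do not come only from shared positive neighbors: two support vertices are also adjacent in $C^{*}$ when they are adjacent in $G$ (Type $(S_1)$), share a degree-$3$ neighbor (Type $(S_2)$), or are $1$-linked through a degree-$2$ vertex (Type $(S_3)$). In the lock itself --- the very structure you must end up identifying --- four of the six edges of $C^{*}=K_4$ come from $1$-links and only two from positive vertices, so the identification of $d_{C^{*}}(s)$ with positive-neighbor sharing is false exactly where you need it. Consequently the divisibility $m=(m-2)P$, hence $m\in\{3,4\}$, is not established.

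Even granting $m=4$ with two positive vertices, the jump ``which is precisely a lock'' hides the real work: one must rule out that the four support vertices are of Types $(S_1)$ or $(S_2)$, and establish the crosswise $1$-links and the degree-$2$ common neighbors. The paper does this in Claim~\ref{cl:clique4} by purely structural arguments, with no color counting: it defines, for a support vertex $x$, the at most two vertices that can conflict with $x$ other than through a positive vertex; shows via Lemma~\ref{cl:cycle2} that some pair of support vertices in $S$ has no common positive neighbor; deduces from this that $p\leq 4$ and pins down who conflicts with whom; then shows successively that no two of the four support vertices are adjacent in $G$ (adjacency would force Type $(S_1)$ and a collapse of vertices), that none has degree $2$ (a degree-$3$ common neighbor would force all degrees to be $2$ and again a collapse), hence all four are of Type $(S_3)$, their common neighbors have degree $2$, and the ten vertices induce a lock --- contradicting the choice of $C$. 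Your clique case needs to be replaced by an argument of this kind; the coloring-tightness bookkeeping you propose cannot deliver it.
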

\begin{proof}
 

  Suppose by contradiction that $H(G)$ contains a 2-connected subgraph
  $C$ of size $\geq 3$ that is not a cycle with an odd number of
  support vertices nor a subgraph of a lock of $H(G)$.  Let
  $S=\{s_1,\ldots,s_{p}\}$ be the support vertices of $C$. By
  Lemma~\ref{cl:cycle2}, $p\geq 3$. Let $\mathcal S$ be the graph with
  $V(\mathcal S)=S$ where there is an edge between $s_i$ and $s_j$ if
  and only if they are adjacent or have a common neighbor in $G$.

  \begin{claim}
    \label{cl:clique4}
   $\mathcal S$ is not a clique of size at least four.     
  \end{claim}
 
  \begin{proof}
    Suppose, by contradiction that $\mathcal S$ is a clique with $p\geq 4$.

    Given a support vertex $x$, we say that a support vertex $x'$,
    distinct from $x$,
    satisfies the property $P_x$ if it is either adjacent to $x$ in
    $G$ or has a non-positive common neighbor with $x$ in $G$.  At
    most two vertices can satisfy $P_x$ (vertex $a$ of
    Figure~\ref{fig:neg} if $x$ is of Type ($S_1$), vertices $b,c$ if
    $x$ is of Type ($S_2$), vertices $b,d$ if $x$ is of Type
    ($S_3$)). Note that if $x$ satisfies $P_x$, then $x$ satisfies
    $P_{x'}$.

    We claim that there exist two support vertices in $S$ that do not
    have a positive common neighbor in $G$. Suppose by contradiction,
    that every pair of vertices of $S$ has a positive common
    neighbor. By Lemma~\ref{lem:supportpositive}, every support vertex
    has at most one positive neighbor, so all the vertices of $S$ are
    adjacent to the same positive vertex $v$. As $C$ is 2-connected,
    there is a path $Q$ in $C\setminus \{v\}$ between $s_1,s_2$. Let
    $s_i$ be the first support vertex, distinct from $s_1$, appearing
    along $Q$ while starting from $s_1$ (maybe $i=2$ if there is no
    support vertex in the interior of $Q$). Let $Q'$ be the subpath
    of $Q$ between $s_1$ and $s_i$ (maybe $Q=Q'$). Then $Q'\cup\{v\}$
    forms a 2-connected subgraph of size $\geq 3$ with exactly two
    support vertices, a contradiction to Lemma~\ref{cl:cycle2}.  So
    there exist two support vertices $x,x'$ in $S$ that do not have a
    positive common neighbor in $G$. Since $\mathcal S$ is a clique,
    vertices $x,x'$ are adjacent or have a common non-positive
    neighbor, so $x$ satisfies $P_{x'}$ (and $x'$ satisfies $P_x$).

    Suppose there exists a support vertex $y\in S$ that does not
    satisfy $P_x$ nor $P_{x'}$. Since $\mathcal S$ is a clique, vertex
    $y$ has a common positive neighbor $z$ with $x$ and $z'$ with
    $x'$. Since $x$ and $x'$ have no positive common neighbor, $z$ and
    $z'$ are distinct. Thus $y$ has two positive neighbors, a
    contradiction. So every vertex of $S\setminus \{x,x'\}$ satisfies
    either $P_x$ or $P_{x'}$.  If two vertices $y,y'$ of $S\setminus
    \{x,x'\}$ satisfy $P_x$, then at least three vertices, $x',y,y'$
    verify $P_x$, a contradiction. So there is at most one vertex of
    $S\setminus \{x,x'\}$ satisfying $P_x$ and similarly at most one
    satisfying $P_{x'}$. So $p\leq 4$ and we can assume, w.l.o.g.,
    that $S=\{x,x',y,y'\}$, where vertex $y$ satisfies $P_x$ and not
    $P_{x'}$ and vertex $y'$ satisfies $P_{x'}$ and not $P_x$.  Thus
    $x$ has a common positive neighbor $z$ with $y'$ and $x'$ has a
    common positive neighbor $z'$ with $y$.  Since $x,x'$ do not have
    a common positive neighbor, $z$ and $z'$ are distinct. Vertices
    $y,y'$ have at most one positive neighbor, thus, they do not have
    a common positive neighbor. Since $\mathcal S$ is a clique, $y$
    satisfies $P_{y'}$. Let $(y_1,y_2,y_3,y_4)=(x,x',y',y)$ (subscript
    are understood modulo $4$).

  Suppose there exists $i\in\{1,2,3,4\}$ such that $y_i,y_{i+1}$ are
  adjacent in $G$. Two support vertices can be adjacent only if they
  are of Type ($S_1$). So $y_i,y_{i+1}$ are of Type ($S_1$) and of
  degree two. Then $y_i$ is only adjacent to $y_{i+1}$ and to a
  positive vertex in $\{z,z'\}$. If $y_{i}$ is adjacent to $y_{i-1}$,
  then $y_{i-1}=y_{i+1}$, a contradiction.  If $y_{i}$ is not adjacent
  to $y_{i-1}$, then $y_{i+1}$ is a common neighbor of $y_{i}$ and
  $y_{i-1}$. Since $y_{i+1}$ is of degree two and has a positive
  neighbor, $y_{i}=y_{i-1}$, a contradiction. So $y_i,y_{i+1}$ are not
  adjacent in $G$ for any $1\leq i\leq 4$. Let $w_i$ be a non-positive
  common neighbor of $y_i,y_{i+1}$.

  Suppose there exists $i\in\{1,2,3,4\}$ such that $d(y_i)=2$. Then
  $w_i=w_{i-1}$. So $\{y_{i-1},y_{i},y_{i+1}\}\subseteq N(w_i)$, and
  $w_i$ is not positive, so $d(w_i)= 3$.  Two support vertices can
  have a common neighbor of degree $3$ only if they are both of degree
  two (Type ($S_2$)).  So $d(y_{i-1})=d(y_{i})=d(y_{i+1})=2$.  Since
  $y_{i+1}$ is of degree two and has a positive neighbor,
  $w_{i}=w_{i+1}$, so $y_{i+2}\in N(w_i)$, a contradiction.  So
  $d(y_i)\geq 3$ for any $1\leq i\leq 4$.

  Then all the $y_i$ are of Type ($S_3$), they are of degree three and
  their non positive neighbors are of degree two. Thus $d(w_i)=2$ for
  any $1\leq i\leq 4$. So $y_1,\ldots,y_4,w_1,\ldots,w_4,z,z'$ induce
  a lock.  So all the edges incident to
  $S=\{y_1,\ldots,y_4\}=\{s_1,\ldots,s_4\}$ belong to a lock,
  contradicting the definition of $C$.
  \end{proof}

  By Lemma~\ref{cl:cycle2}, the graph $\mathcal S$ is not an edge. If
  $\mathcal S$ is a triangle, then $C$ contains exactly three support
  vertices and, by Lemma~\ref{lem:triangle}, it is a cycle with an odd
  number of support vertices, a contradiction. So $\mathcal S$ is not
  a triangle.  By Claim~\ref{cl:clique4}, $\mathcal S$ is not a clique
  of size at least $4$. So finally, $\mathcal S$ is not a clique.

  Suppose, by contradiction, that $\mathcal S$ is an odd cycle with
  $\geq 5$ vertices. Then  $C$ is a 2-connected graph that is not
  a cycle, so it contains a vertex $v$ with at least $3$ neighbors in
  $C$. If $v$ is not a support vertex, then it has at least $3$
  support neighbors in $C$ that form a triangle in $\mathcal S$, a
  contradiction. So $v$ is a support vertex. Then either $v$ has three
  neighbors in $\mathcal S$, a contradiction to $\mathcal S$ being a
  cycle, or $C$ contains a cycle with two support vertices, a
  contradiction to Lemma~\ref{cl:cycle2}. So $\mathcal S$ is not an
  odd cycle.

  Suppose, by contradiction, that $\mathcal S$ is not
  $2$-connected. Then there exist three support vertices $s,s',s''$
  of $S$ such that $s',s''$ appears in two different connected
  components of $\mathcal S\setminus\{s\}$. As $C$ is 2-connected,
  there exists a path $Q$ between $s',s''$ in $C\setminus \{s\}$. This
  path $Q$ is composed only of edges incident to support vertices so
  in $\mathcal S\setminus\{s\}$ it corresponds to a path between
  $s',s''$, a contradiction. So $\mathcal S$ is $2$-connected.

  We now consider the graph $G$, we color by minimality $G\setminus
  (S\cup\{v \in N_G(S) | d_G(v)\leq 3\})$.  We show how to color $S$.
  In the three Types ($S_j$), the number of constraints on a support
  vertex $s_i$ of Type ($S_j$) is at most $k+2$ minus the number of
  its neighbors in $\mathcal S$.  So the number of available colors of
  a support vertex is at least its degree in $\mathcal S$. Now
  Lemma~\ref{claim:Brooks} can be applied to $\mathcal S$ that is not
  a clique, not an odd cycle and 2-connected. So we can color
  $S$. Every vertex of $\{v \in N_G(S) | d_G(v)\leq 3\}$ has at most
  $17$ constraints, hence we can extend the coloring to the whole
  graph, a contradiction.
\end{proof}

A \emph{cactus} is a connected graph in which any two cycles have at
most one vertex in common.

\begin{lemma}
\label{lem:cactus}
  Every connected component of $H(G)$ is either a cactus where each
  cycle has an odd number of support vertices or a lock.
\end{lemma}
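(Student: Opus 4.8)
The plan is to analyze a connected component $K$ of $H(G)$ through its block decomposition, using Lemma~\ref{lem:nest} to classify its blocks of size at least three. I will use the standard fact that a connected graph is a cactus if and only if every block is a single edge or a cycle, and split into two cases according to whether $K$ has a $2$-connected block that is not an odd cycle.

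The easy case first: suppose every maximal $2$-connected subgraph of $K$ of size at least three is a cycle with an odd number of support vertices. Then every block of $K$ is either a bridge or such a cycle, so by the block characterization of cacti, $K$ is a cactus; since the cycles of $K$ are exactly its cyclic blocks, each has an odd number of support vertices. This case needs nothing beyond Lemma~\ref{lem:nest} and the block structure.

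The content is the other case: $K$ has a maximal $2$-connected subgraph $B$ of size at least three that is not a cycle. By Lemma~\ref{lem:nest}, $B$ is a subgraph of some lock $L$ of $H(G)$. One checks directly from Figure~\ref{fig:lock} that a lock is $2$-connected; since $B\subseteq L\subseteq K$ (the inclusion $L\subseteq K$ holds because $L$ is connected and meets $K$) and $B$ is a maximal $2$-connected subgraph of $K$, maximality forces $B=L$. Using the notation of Figure~\ref{fig:lock}, the degree-$3$ vertices $v_1,v_2,w_1,w_2$ are weak (with $v_1,v_2$ each $1$-linked to both $w_1$ and $w_2$ through the degree-$2$ vertices $a,b,c,d$), hence are support vertices of Type ($S_3$); by Lemma~\ref{lem:supportpositive} their unique positive neighbors are $u$ and $x$, which thus have degree $k$. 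Every edge of $L$ is incident to a support vertex, so $L\subseteq H(G)$.

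It remains to prove $K=L$, and this closing-up step is the main obstacle. Every vertex of $L$ other than $u,x$ has all of its $G$-neighbors inside $L$ (the support vertices have degree $3$ and the link vertices degree $2$, all appearing in the figure), so an $H(G)$-edge could leave $L$ only through a support neighbor of $u$ outside $\{v_1,v_2\}$ or of $x$ outside $\{w_1,w_2\}$. This is ruled out by Configuration ($C_{11}$): if $u$ had a support neighbor $s\notin\{v_1,v_2\}$, then $u$ (of degree $k$) together with $v=v_1$, $w=v_2$, the support vertex $x=s$, and $y$ chosen as whichever of $w_1,w_2$ is distinct from $s$, realizes ($C_{11}$), since $v_1,v_2$ are both $1$-linked to $y$ (of degree $3$) and each is $1$-linked to the other of $w_1,w_2$ (of degree $3\le 14$, distinct from $y$); this contradicts Lemma~\ref{lem:config}. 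The symmetric argument applies to $x$. Hence no $H(G)$-edge leaves $L$, so $K=L$ is a lock, completing the proof.
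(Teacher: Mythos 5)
Your proof is correct and takes essentially the same route as the paper's: both combine Lemma~\ref{lem:nest}, the block characterization of cacti, and the Configuration ($C_{11}$) argument showing that no edge of $H(G)$ can leave a lock --- the paper just proves first that a lock is an entire connected component of $H(G)$, while you recover the same fact inside a fixed component via block maximality. One wording fix: your second case should read ``$B$ is not a cycle \emph{with an odd number of support vertices}'' rather than ``not a cycle'' (as written, a cycle with an even number of support vertices falls outside both of your cases); your argument then handles it verbatim, since Lemma~\ref{lem:nest} still places such a $B$ inside a lock, and maximality plus the ($C_{11}$) step finish as before.
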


\begin{proof}
  All the edges of a lock are incident to support vertices of type
  \usss so all the edges of a lock of $G$ appear in $H(G)$.  The only
  vertices of a lock that can have neighbors outside a lock are locked
  vertices (vertices $u$ and $x$ on Figure~\ref{fig:lock}).  By
  Lemma~\ref{lem:config}, graph $G$ does not contain Configuration
  {($C_{11}$)}, so a locked vertex is incident to only two support
  vertices, the two support vertices of a lock. A lock is a connected
  component of $H(G)$.

  Let $C$ be a connected components of $H(G)$ that is not a lock. By
  Lemma~\ref{lem:nest}, each 2-connected subgraph of $C$ is a cycle
  with an odd number of support vertices. So $C$ is a cactus where each
  cycle of $C$ has an odd number of support vertices.
\end{proof}

\section{Discharging rules}\label{sect:dis} 

A \emph{negative} vertex is a support vertex of type ($S_1$) or
($S_2$) or a vertex of degree $2$ adjacent to two support vertices of
type ($S_3$). In this case we say that the negative vertex is of type ($N_1$),
($N_2$) or ($N_3$) respectively.

We design discharging rules $R_{1.1}$, $R_{1.2}$, $R_{1.3}$,
$R_{1.4}$, $R_{1.5}$, $R_2$, $R_3$, $R_4$ and $R_g$ (see
Figure~\ref{fig:rules}): for any vertex $x$ of degree at least $3$,

\begin{itemize}
\item Rule $R_1$ is when $3\leq d(x)\leq 7$, and $x$ is $1$-linked (with a
path $x-a-y$) to a vertex $y$.
\begin{itemize}
\item Rule $R_{1.1}$ is when $x$ is weak with $d(y)\leq 7$.  Then $x$
  gives $\frac{2}{5}$ to $a$.
\item Rule $R_{1.2}$ is when $x$ is not weak and $y$ is weak. Then $x$
gives $\frac{3}{5}$ to $a$
\item Rule $R_{1.3}$ is when $x$ and $y$ are not weak, with $d(y)\leq
  7$. Then $x$ gives $\frac{1}{2}$ to $a$.
\item Rule $R_{1.4}$ is when $8 \leq d(y)\leq 14$. Then $x$ gives
  $\frac{3}{8}$ to $a$.
\item Rule $R_{1.5}$ is when $15\leq d(y)$ and $a$ is not
  negative. Then $x$ gives $\frac{1}{5}$ to $a$.
\end{itemize}
\item Rule $R_2$ is when $3\leq d(x)\leq 7$ and $x$ is adjacent to a vertex
$u$ of degree $3$ that is adjacent to a vertex of degree $2$ and a
vertex of degree at most $7$. Then $x$ gives $\frac{1}{10}$ to $u$.
\item Rule $R_3$ is when $8 \leq d(x) \leq 14$. Then $x$ gives
$\frac{5}{8}$ to each of its neighbors.
\item Rule $R_4$ is when $15 \leq d(x)$. Then $x$ gives $\frac{4}{5}$
to each of its neighbors. 
\item Rule $R_g$ states that each  positive vertex gives $\frac{2}{5}$
  to a common pot, and that each negative vertex receives $\frac{1}{5}$
  from the common pot.
 
\end{itemize}

\begin{figure}
  \subfigure[$R_{1.1}$]{\makebox[2.5cm]{\begin{tikzpicture}[scale=0.95] \configtikz
\draw node[tnode] (x2) [label=left:$x$] {$w$} -- ++(-90:1cm)
node[blacknode] (a2) [label=left:$a$] {} -- ++(-90:1cm) 
node[tnode] (y2) [label=left:$y$] {$7^-$}; 
\draw (x2) edge [post,bend left] node [label=right:$\frac{2}{5}$] {}
(a2); 
\end{tikzpicture}}} \hfill
  \subfigure[\noindent $R_{1.2}$]{\makebox[2.5cm]{\begin{tikzpicture}[scale=0.95] \configtikz
\draw (2,0) node[tnode] (x2) [label=left:$x$] {\small{$\neg w$ $3^+$ $7^-$}} --
++(-90:1cm) node[blacknode] (a2) [label=left:$a$] {} -- ++(-90:1cm)
node[whitenode] (y2) [label=left:$y$] {$w$}; \draw (x2) edge
[post,bend left] node [label=right:$\frac{3}{5}$] {} (a2);

\end{tikzpicture}}} \hfill
  \subfigure[$R_{1.3}$]{\makebox[2.5cm]{\begin{tikzpicture}[scale=0.95] \configtikz
\draw  node[tnode] (x2) [label=left:$x$] {$\neg w$ $3^+$ $7^-$} --
++(-90:1cm) node[blacknode] (y2) [label=left:$a$] {} -- ++(-90:1cm)
node[tnode] (w2) [label=left:$y$] {\small{$\neg w$ $7^-$}}; \draw (x2)
edge [post,bend left] node [label=right:$\frac{1}{2}$] {} (y2);

\end{tikzpicture}}} \hfill
  \subfigure[$R_{1.4}$]{\makebox[2.5cm]{\begin{tikzpicture}[scale=0.95] \configtikz
\draw node[tnode] (x2) [label=left:$x$] {$3^+$ $7^-$} -- ++(-90:1cm)
node[blacknode] (y2) [label=left:$a$] {} -- ++(-90:1cm) node[tnode]
(w2) [label=left:$y$] {\small{$8^+$ $14^-$}}; \draw (x2) edge
[post,bend left] node [label=right:$\frac{3}{8}$] {} (y2);
\end{tikzpicture}}} \hfill
  \subfigure[$R_{1.5}$]{\makebox[2.5cm]{\begin{tikzpicture}[scale=0.95] \configtikz
\draw node[tnode] (x2) [label=left:$x$] {$3^+$ $7^-$} -- ++(-90:1cm)
node[blacknode] (y2) [label=left:$a$] {} -- ++(-90:1cm) node[tnode]
(w2) [label=left:$y$] {\small{$15^+$}}; \draw (x2) edge [post,bend
left] node [label=right:$\frac{1}{5}$] {} (y2);
\node[texte, right=1pt] at (0.3,-1.2) {$\overline{\rm{negative}}$};
\end{tikzpicture}}}
  \centering \subfigure[$R_2$]{\begin{tikzpicture}[scale=0.95] \configtikz
\draw node[tnode] (w1) {} -- ++(180:1cm)
node[blacknode] (u1) {} -- ++(180:1cm) node[blacknode] (u2)
[label=180:$u$] {} --
++(-120:1cm) node[tnode] (w2) {\small{$3^+$ $7^-$}};
\draw (u2) -- ++(120:1cm) node[tnode] (y3) [label=180:$x$]
{\small{$3^+$ $7^-$}}; \draw (y3) edge [post,bend left] node
[label=right:$\frac{1}{10}$] {} (u2);

\end{tikzpicture}} $\qquad$
  \subfigure[$R_3$]{\makebox[3cm]{\begin{tikzpicture}[scale=0.95] \configtikz
\draw (0,0) node[tnode] (x2)
[label=left:$x$] {$8^+$ $14^-$} -- ++(-90:1cm) node[whitenode] (y2)
[label=left:$a$] {}; \draw (x2) edge [post,bend left] node
[label=right:$\frac{5}{8}$] {} (y2);
\end{tikzpicture}}} $\qquad$
  \subfigure[$R_4$]{\makebox[3cm]{\begin{tikzpicture}[scale=0.95] \configtikz
\draw (0,0) node[tnode] (x2)
[label=left:$x$] {$15^+$} -- ++(-90:1cm) node[whitenode] (y2)
[label=left:$a$] {}; \draw (x2) edge [post,bend left] node
[label=right:$\frac{4}{5}$] {} (y2);
\end{tikzpicture}}} 
  \caption{Discharging rules $R_{1.i}$, $R_2$, $R_3$, and $R_4$}
  \label{fig:rules}
\end{figure}
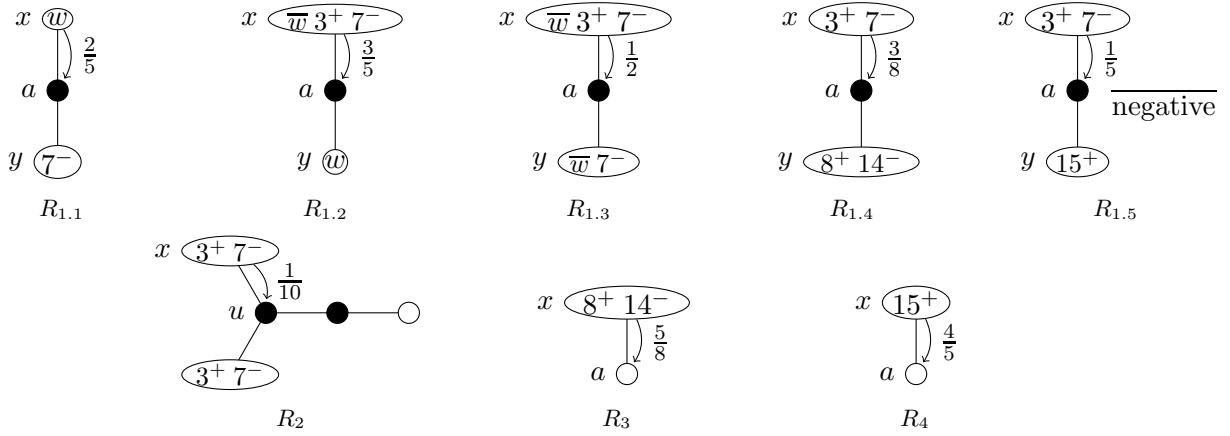

\begin{lemma}\label{lem:Rgvalid} 

The common pot has
  non-negative value after applying $R_g$.
\end{lemma}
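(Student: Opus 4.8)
The plan is to unwind what the pot condition says numerically: writing $P$ for the set of positive vertices and $N$ for the set of negative vertices, $R_g$ pours $\frac{2}{5}|P|$ into the pot and draws out $\frac{1}{5}|N|$, so the claim is exactly $2|P|\ge |N|$. Since positive vertices (adjacent to a support vertex) and negative vertices of all three types lie in $H(G)$, and each belongs to a unique connected component, it suffices to prove $2P_c\ge N_c$ for every component, where $P_c,N_c$ count the positive and negative vertices inside it. By Lemma~\ref{lem:cactus} each component is either a lock or a cactus whose cycles each carry an odd number of support vertices, so I would treat these two cases separately.

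For a lock I would simply read the counts off Figure~\ref{fig:lock}: its two locked vertices are its only positive vertices (they have degree $k$ by Lemma~\ref{lem:supportpositive}), and the four degree-$2$ vertices joining the four $S_3$-support vertices are precisely its negative vertices, all of type $N_3$. Hence $2P_c-N_c=2\cdot 2-4=0$, and a lock contributes nothing to the pot.

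The substance is the cactus case, which I would handle by a global count on the component. Negative vertices of type $N_1$ or $N_2$ are themselves support vertices, each adjacent by Lemma~\ref{lem:supportpositive} to exactly one positive vertex, while those of type $N_3$ are the degree-$2$ vertices sitting between two $S_3$-support vertices. Every non-support vertex of the component is adjacent in $H(G)$ only to support vertices, so the only vertices present are the positive ones, the support ones, the $N_3$-vertices, and the auxiliary low-degree vertices that sit at the centre of an $S_2$-configuration or dangle off an $S_3$-support vertex. Summing degrees in $H(G)$ and substituting $|E|=|V|-1+c$ (valid since a connected cactus with $c$ cycle-blocks has first Betti number $c$) gives, after cancellation, an identity of the shape
\[
2P_c-N_c \;=\; n_3 \;+\; (\text{a nonnegative term}) \;-\; 2(c-1),
\]
where $n_3$ is the number of $N_3$-vertices and $c$ the number of cycles.

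It then remains to beat the $-2(c-1)$ using the odd-cycle hypothesis, and this is the crux. I would show that every cycle of the cactus carries at least two $N_3$-vertices. On a cycle, each $S_1$-support uses both its edges and so drags along its $S_1$-partner, and each $S_2$-support drags along its partner through the shared central vertex; thus $S_1$- and $S_2$-supports occur in pairs and contribute an even count, so the odd total forces an odd (hence positive) number of $S_3$-supports on the cycle. Since $S_3$-supports are never adjacent, consecutive ones are separated by $N_3$-vertices: a cycle with no positive vertex is an alternating $S_3/N_3$ cycle of odd length $\ge 3$ and carries at least three of them, while a cycle meeting a positive vertex splits into maximal $S_3$-runs each of length $\ge 2$ whose lengths sum to an odd number, again forcing at least two $N_3$-vertices. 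As cycles of a cactus are edge-disjoint and an $N_3$-vertex has degree $2$, these vertices are distinct across cycles, giving $n_3\ge 2c$ and hence $2P_c-N_c\ge 2>0$. Combined with the lock case this yields $2|P|\ge|N|$. The main obstacle I anticipate is making the degree-count bookkeeping airtight—enumerating exactly which auxiliary non-positive, non-support vertices can occur, and nailing the parity argument that pins down two $N_3$-vertices on each cycle.
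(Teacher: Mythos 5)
Your proposal is correct, but it follows a genuinely different route from the paper's. The paper proves $p(C)\ge\lceil n(C)/2\rceil$ for each component $C$ of $H(G)$ by induction (Claim~\ref{cl:pendant}): it repeatedly peels off either a pendant positive vertex together with the support structure attached to it, or a leaf cycle of the cactus, the cycle case being supplied by Lemma~\ref{lem:cycleimpair}. You replace that induction by a single global count, and your ``identity of the shape'' is in fact exactly right: writing $\sigma_1,\sigma_2,\sigma_3$ for the numbers of support vertices of each type in the component, $m$ for the number of degree-$3$ centres of ($S_2$)-configurations and $d$ for the number of degree-$2$ vertices that are pendant in $H(G)$, Lemma~\ref{lem:supportpositive} pins down every $H$-degree, the handshake lemma gives $|E|=\tfrac32\sigma_1+2\sigma_2+3\sigma_3$ and $|V|=P_c+\sigma_1+\sigma_2+3\sigma_3-n_3+m$ (using $d=2\sigma_3-2n_3$), and substituting into $|E|=|V|-1+c$ yields
\[
2P_c-N_c=(\sigma_2-2m)+n_3-2(c-1),
\]
with $\sigma_2\ge 2m$ because each centre serves at least two ($S_2$)-supports. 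Two remarks. First, your run-parity sketch for ``each cycle carries at least two $N_3$-vertices'' is the shakiest step of the proposal, and it need not be redone from scratch: it is immediate from Lemma~\ref{lem:cycleimpair}, since in the subpath $s_1v_1s_2v_2s_3$ both $v_1$ and $v_2$ are degree-$2$ vertices adjacent to two Type-($S_3$) supports, hence negative of Type ($N_3$); their distinctness across cycles follows, as you note, from the edge-disjointness of the cycles of a cactus together with $N_3$-vertices having degree $2$. Second, making the bookkeeping airtight requires exactly the verifications you anticipate --- that the partner of an ($S_1$)-support is itself an ($S_1$)-support, that every $H$-edge of an ($S_2$)-centre goes to an ($S_2$)-support, and that each degree-$2$ neighbour of an ($S_3$)-support is either of Type ($N_3$) or pendant in $H(G)$ --- and all of them do go through using Lemma~\ref{lem:supportpositive}. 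Your method buys a short, quantitative argument ($2P_c-N_c\ge 2$ for every cactus component, $2P_c-N_c=0$ for a lock) with no induction invariant to maintain; the paper's induction buys freedom from this exhaustive classification of the vertices of $H(G)$ and their $H$-degrees.
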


\begin{proof}
  Given a set of vertices $X$, let $n(X)$ be its number of negative
  vertices and $p(X)$ its number of positive vertices.  To prove that
  the common pot has positive value after applying $R_g$, we show that
  each connected component $C$ of $H(G)$ satisfies $p(C)\geq
  \left\lceil\frac{n(C)}{2}\right\rceil$.

  Let $C$ be a connected components of $H(G)$.  By
  Lemma~\ref{lem:cactus}, $C$ is either a cactus where each
  cycle has an odd number of support vertices or a lock.  If $C$ is a
  lock, then $n(C)=4$ and $p(C)=2$, so we are done. So we can assume
  that $C$ is a cactus where each cycle has an odd number of support
  vertices.


  \begin{claim}
\label{cl:pendant}
Every connected subgraph $C'$ of $C$, whose pendant vertices are
positive vertices, whose support vertices are adjacent to their
positive neighbor in $C'$ and whose negative vertices of Type \nnn are
adjacent to their two neighbors in $C'$, satisfies $p(C')\geq
\left\lceil\frac{n(C')}{2}\right\rceil$.
  \end{claim}

\begin{proof}
  Suppose by contradiction that this is false. Let $C'$ be a connected
  subgraph
  of $C$ of minimum number of vertices, whose pendant vertices are
  positive vertices, whose support vertices are adjacent to their positive
  neighbor in $C'$, and such that $p(C')<
  \left\lceil\frac{n(C')}{2}\right\rceil$.  The graph $C'$ is a
  connected subgraph of a cactus so it is also a cactus.

  Suppose first that $C'$ contains a pendant vertex $u$.
  Let $x$ be the neighbor of the positive vertex $u$ in $C'$. As
  $H(G)$ contains only edges incident to support vertices, $x$ is a
  support vertex. So it is not positive and thus is not a pendant
  vertex of $C'$. So $x$ has at least two neighbors in $C'$. We
  consider different cases according to the Type of $x$ and its number
  of neighbors in $C'$.

  \begin{itemize}
  \item \emph{$x$ is of Type \us.} Then let $a$ be the neighbor of $x$
    distinct from $u$.  We have $a\in C'$ and $a$ is a support vertex
    of Type \us. The positive neighbor $b$ of $a$ is in $C'$ by
    assumption.  Let $C''$ be the graph $C'\setminus \{u,x,a\}$. We
    have $n(C'')=n(C')-2$ and $p(C'')=p(C')-1$.  The graph $C''$ is a
    connected subgraph of $C$ since $u, x, a$ is subpath of $C'$ where
    $u$ is pendant and $x,a$ are of degree $2$.  All the pendant
    vertices of $C''$ are positive since the only new possible pendant
    vertex is $b$.  All the support vertices of $C''$ are adjacent to
    their positive neighbor in $C''$ since the only positive vertex
    that has been removed is $u$ and its support neighbor $x$ has also
    been removed. All the negative vertices of Type \nnn are adjacent
    to their two neighbors in $C'$ as no support vertex of Type \usss
    has been removed. So by minimality, we have $p(C'')\geq
    \left\lceil\frac{n(C'')}{2}\right\rceil$, and so
    $p(C')=p(C'')+1\geq \left\lceil\frac{n(C'')+2}{2}\right\rceil=
    \left\lceil\frac{n(C')}{2}\right\rceil$.

  \item \emph{$x$ is of Type \uss.} Then let $a$ be the neighbor of
    $x$ distinct from $u$. We have $a\in C'$ and $a$ is of degree
    $3$. Let $b,c$ be the neighbors of $a$ distinct from $x$. Since
    $a$ is not positive, it is not a pendant vertex of $C'$, so at
    least one of $b,c$ is in $C'$. We assume w.l.o.g. that $c$ is in
    $C'$. As $H(G)$ contains only edges incident to support vertices,
    vertex $c$ is a support vertex of Type \uss. We consider two cases
    depending on whether $a$ has its three neighbors in $C'$ or
    not.

    If $b\in C'$, then let $C''$ be the graph $C'\setminus
    \{u,x\}$. We have $n(C'')=n(C')-1$ and $p(C'')=p(C')-1$.  The
    graph $C''$ is a connected subgraph of $C$, all its pendant
    vertices are positive, all its support vertices are adjacent to
    their positive neighbor in $C''$ and all its negative vertices of Type \nnn are
adjacent to their two neighbors in $C'$. So by minimality, we have $p(C'')\geq
    \left\lceil\frac{n(C'')}{2}\right\rceil$, and so $p(C')\geq
    \left\lceil\frac{n(C')}{2}\right\rceil$.

    If $b\notin C'$, then let $C''$ be the graph $C'\setminus
    \{u,x,a,c\}$.  We have $n(C'')=n(C')-2$ and $p(C'')=p(C')-1$.  The
    graph $C''$ is a connected subgraph of $C$, all its pendant
    vertices are positive, all its support vertices are adjacent to
    their positive neighbor in $C''$ and all its negative vertices of Type \nnn are
adjacent to their two neighbors in $C'$. So by minimality, we have $p(C'')\geq
    \left\lceil\frac{n(C'')}{2}\right\rceil$, and so $p(C')\geq
    \left\lceil\frac{n(C')}{2}\right\rceil$.

  \item \emph{$x$ is of Type \usss and has two neighbors in $C'$.}
    Then let $c$ be the neighbor of $x$ distinct from $u$ that is in
    $C'$.  Vertex $c$ is of degree $2$, it is not positive, so its
    neighbor $d$, distinct from $x$, is in $C'$.  As $H(G)$ contains
    only edges incident to support vertices and $c$ is not a support
    vertex, vertex $d$ is a support vertex and so of Type \usss. Let
    $e,f$ be the neighbors of $d$ distinct from $c$ where $e$ is a
    positive vertex and $f$ is a vertex of degree $2$. Vertex $e$ is
    the positive neighbor of $d$ so it is in $C'$ by assumption.  We
    consider two cases corresponding to whether $d$ has its three
    neighbors in $C'$ or not.  If $f\in C'$, then let $C''$ be the
    graph $C'\setminus \{u,x,c\}$. If $f\notin C'$, then let $C''$ be
    the graph $C'\setminus \{u,x,c,d\}$. In both cases, we have
    $n(C'')=n(C')-1$ and $p(C'')=p(C')-1$.  The graph $C''$ is a
    connected subgraph of $C$, all its pendant vertices are positive,
    all its support vertices are adjacent to their positive neighbor
    in $C''$ and all its negative vertices of Type \nnn are adjacent
    to their two neighbors in $C'$. So by minimality, we have
    $p(C'')\geq \left\lceil\frac{n(C'')}{2}\right\rceil$, and so
    $p(C')\geq \left\lceil\frac{n(C')}{2}\right\rceil$.

  \item \emph{$x$ is of Type \usss and has three neighbors in $C'$.}
    Then let $a,c$ be the neighbors of $x$ distinct from $u$. We have
    $a,c$ in $C'$. Vertex $a$ (resp. $c$) is of degree $2$, it is not
    positive, so its neighbor $b$ (resp. $d$) is in $C'$. As $H(G)$
    contains only edges incident to support vertices and $a$ and $c$
    are not support vertices, vertices $b$ and $d$ are support
    vertices and so of Type \usss. The positive neighbor $h$ of $b$
    (resp. $e$ of $d$) is in $C'$, by assumption.  We consider several
    cases corresponding to whether $b$ and $d$ have their three
    neighbors in $C'$ or not.  If $b$ and $d$ both have their three
    neighbors in $C'$, then let $C''$ be the graph $C'\setminus
    \{u,x,c,a\}$. If $b$ has its three neighbors in $C'$ but not $d$,
    then let $C''$ be the graph $C'\setminus \{u,x,c,a,d\}$.  If $d$
    has its three neighbors in $C'$ but not $b$, then let $C''$ be the
    graph $C'\setminus \{u,x,c,a,b\}$.  If none of $b$ and $d$ has its
    three neighbors in $C'$, then let $C''$ be the graph $C'\setminus
    \{u,x,c,a,b,d\}$. In the four cases we have $n(C'')=n(C')-2$ and
    $p(C'')=p(C')-1$. The graph $C''$ is not necessarily connected but
    it is composed of one or two connected subgraph of $C$ whose all
    pendant vertices are positive, all support vertices are adjacent
    to their positive neighbor in $C''$ and all its negative vertices
    of Type \nnn are adjacent to their two neighbors in $C'$. So by
    minimality (on each component of $C''$), we have $p(C'')\geq
    \left\lceil\frac{n(C'')}{2}\right\rceil$, and so $p(C')\geq
    \left\lceil\frac{n(C')}{2}\right\rceil$.
  \end{itemize}
  
  Now we can assume that $C'$ contains no pendant vertex. Suppose that
  $C'$ is a single vertex $v$. Then $v$ is not support as all support
  vertices have their positive neighbor in $C'$ and $v$ is not
  negative of Type \nnn as negative vertices of Type \nnn have their
  two neighbors in $C'$. So $v$ is not negative and $p(C')\geq
  \left\lceil\frac{n(C')}{2}\right\rceil=0$. Now we can assume that
  $C'$ is not a single vertex.  The graph $C'$ is a cactus, not a
  single vertex, contains no pendant vertex, so it contains a cycle
  $C''$, of size $\geq 3$, such that $C'''=C'\setminus C''$ is
  connected (note that we may have $C'=C''$ and $C'''$ is empty).
  Cycle $C''$ is a cycle of $C$ so it has an odd number of support
  vertices by Lemma~\ref{lem:cactus}.  Let $S$ be the set of support
  vertices of $C''$, with $s=|S|$. By Lemma~\ref{lem:cycleimpair},
  cycle $C''$ contains a subpath $s_1v_1s_2v_2s_3$ where $s_1,s_2,s_3$
  are support vertices of Type ($S_3$) and $v_1,v_2$ are vertices of
  degree 2. By assumption, the positive vertex $z$ that is adjcent to
  $s_2$ is in $C'$. It is not in $C''$ as there is no chord in $C''$.
  So the only vertex of $C''$ that has some neighbors in $C'\setminus
  C''$ is $s_2$. So all the positive vertices that are adjacent to
  $S\setminus \{s_2\}$ are vertices of $C'$ and thus of $C''$. A
  positive vertex of $C''$ has at most two support neighbors in $C''$
  so $p(C'')\geq \left\lceil\frac{s-1}{2}\right\rceil$. A support
  vertex of Type \us or \uss is a negative vertex of Type \n or \nn.
  A negative vertex of Type \nnn of $C''$ is of degree $2$ and so has
  its two neighbors on $C''$ and this two neighbors are support
  vertices of Type \usss.  So the number of negative vertices of $C''$
  is less or equal to the number of support vertices of $C''$ and
  strictly less if $C''$ contains a vertex of Type \nnn. Vertex $v_1$
  is of Type \nnn, so $s> n(C'')$ and so $p(C'')\geq
  \left\lceil\frac{s-1}{2}\right\rceil \geq
  \left\lceil\frac{n(C'')}{2}\right\rceil$.  The graph $C'''$ is a
  connected subgraph of $C$ whose all pendant vertices are positive,
  all support vertices are adjacent to their positive neighbor in
  $C'''$ and all its negative vertices of Type \nnn are adjacent to
  their two neighbors in $C'$. So by minimality we have $p(C''')\geq
  \left\lceil\frac{n(C''')}{2}\right\rceil$.  So finally,
  $p(C')=p(C'')+p(C''')\geq \left\lceil\frac{n(C''')}{2}\right\rceil+
  \left\lceil\frac{n(C'')}{2}\right\rceil\geq
  \left\lceil\frac{n(C'')+n(C''')}{2}\right\rceil=
  \left\lceil\frac{n(C')}{2}\right\rceil$.

  \end{proof}

  Let $C'$ be the graph obtained from $C$ by removing all pendant
  vertices that are not positive vertices.  We claim that $C'$ is a
  connected subgraph of $C$, whose pendant vertices are positive
  vertices, whose support vertices have their positive neighbor in
  $C'$, whose negative vertices of Type \nnn are adjacent to their two
  neighbors in $C'$ and such that $n(C')=n(C)$.  As $C$ is connected
  and only pendant vertices have been removed from $C$, the graph $C'$
  is also connected.  All support and negative vertices are of degree
  $2$ or $3$ and have all their incident edges in $H(G)$ and so in
  $C$, so there is no pendant vertex of $C$ that is a support or a
  negative vertex. So no support or negative vertex have been removed
  from $C$ and $n(C')=n(C)$. A pendant vertex of $C$ that has been
  removed is not positive, not support, not negative but incident to a
  support, so it is necessarily a degree $2$ vertex $a$ incident to a
  support vertex $x$ of Type \usss (with notations of
  Figure~\ref{fig:neg}). When $a$ is removed from $C$, this does not
  create any new pendant vertex as $x$ has degree $2$ after the
  removal. All pendant vertices that are not positive are removed from
  $C$, no new pendant vertices are created, thus in $C$ all pendant
  vertices are positive.  No positive vertex has been removed and
  each support vertex is adjacent to its positive neighbor in $H(G)$,
  so support vertices of $C'$ are adjacent to their positive neighbor
  in $C'$. No support vertex have been removed and each negative
  vertex of Type \nnn is adjacent to its support neighbors of Type
  \usss in $H(G)$, so negative vertices of Type \nnn of $C'$ are
  adjacent to their two neighbors in $C'$.  By Claim~\ref{cl:pendant}
  applied to $C'$, we have $p(C')\geq
  \left\lceil\frac{n(C')}{2}\right\rceil$. So $p(C)= p(C')\geq
  \left\lceil\frac{n(C')}{2}\right\rceil=
  \left\lceil\frac{n(C)}{2}\right\rceil$ and we are done.
\end{proof}


We now use the discharging rules to prove the following:

\

\begin{lemma}\label{lem:mad} 
$\mad(G) \geq 3$.
\end{lemma}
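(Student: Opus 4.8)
The plan is to run the discharging rules $R_{1.1}$--$R_{1.5}$, $R_2$, $R_3$, $R_4$, $R_g$ defined above. I assign to each vertex $v$ the initial charge $\mu(v)=d(v)$, so the total charge is $\sum_v d(v)=2|E(G)|$. The rules redistribute charge among vertices and into the common pot while conserving the total, so if $\mu^*$ denotes the final charge then $\sum_v d(v)=\sum_v\mu^*(v)+(\text{pot})$. By Lemma~\ref{lem:Rgvalid} the pot is non-negative, hence $\sum_v d(v)\geq\sum_v\mu^*(v)$. Consequently, if I can show $\mu^*(v)\geq 3$ for every vertex, then $\sum_v d(v)\geq 3|V(G)|$, so $\ad(G)\geq 3$ and a fortiori $\mad(G)\geq\ad(G)\geq 3$. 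The whole argument thus reduces to checking $\mu^*(v)\geq 3$ vertex by vertex, organised by degree.

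I would dispatch the high-degree vertices first, as they are immediate. A vertex $x$ with $8\leq d(x)\leq 14$ only gives charge, namely $\frac{5}{8}$ to each neighbour by $R_3$, so $\mu^*(x)\geq d(x)-\frac{5}{8}d(x)=\frac{3}{8}d(x)\geq 3$. A vertex with $d(x)\geq 15$ gives $\frac{4}{5}$ to each neighbour by $R_4$; if it is not positive it retains at least $\frac{1}{5}d(x)\geq 3$, and if it is positive then by Lemma~\ref{lem:supportpositive} it has degree exactly $k$ and additionally sends $\frac{2}{5}$ to the pot, retaining $\frac{d(x)-2}{5}\geq\frac{k-2}{5}\geq 3$ since $k\geq 17$. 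Any charge these vertices receive only helps, so nothing more is needed.

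Next I would handle the degree-2 vertices, each of which must gain at least $1$. Configuration ($C_1$) rules out degree $\leq 1$. If both neighbours of a degree-2 vertex $a$ have degree at least $3$, then $a$ is the interior of a $1$-link and receives from each side through $R_1$, $R_3$ or $R_4$; I would verify that in each admissible pairing the two contributions sum to at least $1$ (for instance $\frac{1}{5}+\frac{4}{5}$, $\frac{3}{8}+\frac{5}{8}$, $\frac{1}{2}+\frac{1}{2}$, $\frac{3}{5}+\frac{2}{5}$), the only deficient case being $a$ flanked by two weak degree-3 vertices, where $a$ is negative of Type \nnn and the missing $\frac{1}{5}$ is supplied by the pot through $R_g$. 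If instead some neighbour of $a$ has degree $2$, then $a$ is a support vertex of Type \us, hence negative of Type \n; here ($C_2$) applied to the consecutive degree-2 vertices forces the two degree-$\geq 3$ vertices flanking this run to have degree $k$, so $a$ receives $\frac{4}{5}$ through $R_4$ and $\frac{1}{5}$ from the pot.

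Finally, the vertices of degree between $3$ and $7$ are the heart of the matter and where I expect the real difficulty. Such a vertex begins with charge $d(x)\in[3,7]$ and must have non-negative net flow: it may simultaneously feed several incident $1$-links (at the rates of $R_{1.1}$--$R_{1.5}$) and an adjacent degree-3 vertex via $R_2$, while receiving from high-degree neighbours via $R_3$, $R_4$, and possibly via $R_2$. The fractions are tuned so that a vertex may feed a high-rate link (up to $\frac{3}{5}$) only toward low-degree endpoints, whereas links toward high-degree endpoints are fed at the low rates $\frac{1}{5}$ or $\frac{3}{8}$; the forbidden configurations are precisely what exclude the situations where outflow would exceed inflow. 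Concretely, I would split on whether $x$ is weak: a weak vertex has, by the absence of ($C_5$), a neighbour of degree $k$ returning $\frac{4}{5}$ through $R_4$, which balances the at most $\frac{2}{5}$ it sends into each of its two links; a non-weak vertex requires finer bookkeeping of how many links of each type it can carry, governed by ($C_3$), ($C_4$), ($C_6$)--($C_{10}$), together with the $R_2$ exchange among degree-3 vertices and, for degree $7$, configuration ($C_{10}$). Matching these exact rates against the structural restrictions of the forbidden configurations---rather than the high-degree bookkeeping or the global pot argument, which are routine given Lemma~\ref{lem:Rgvalid}---is the step I expect to be the main obstacle.
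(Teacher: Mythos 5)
Your framework is exactly the paper's: assign charge $d(v)$, apply the rules, use Lemma~\ref{lem:Rgvalid} for the pot, and reduce everything to showing every vertex ends with charge at least $3$. Your treatment of degrees $\geq 8$ is correct and matches the paper (including the observation that a positive vertex has degree $k$ by Lemma~\ref{lem:supportpositive}, so it can afford the extra $\frac{2}{5}$ sent to the pot), and your degree-$2$ sketch is mostly right, though it contains a slip: the pot is needed not only for negative vertices of Type ($N_3$). A negative vertex of Type ($N_2$) has one neighbor of degree $3$ and one of degree $k$; rule $R_{1.5}$ explicitly excludes negative vertices, so the degree-$3$ side contributes nothing and the pairing $\frac{1}{5}+\frac{4}{5}$ you list does not occur --- the missing $\frac{1}{5}$ must again come from $R_g$, which is exactly why the paper splits case 1(b)iii(C) into ``$u$ negative'' and ``$u$ non-negative.''

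The genuine gap is that you never carry out the verification for vertices of degree $3$ to $7$, and this is not a routine remainder: it is the overwhelming bulk of the paper's proof. You correctly dispose of weak degree-$3$ vertices via ($C_5$) and $R_4$, but for everything else you only state that the forbidden configurations ``are precisely what exclude the situations where outflow would exceed inflow'' and declare this the expected main obstacle. The paper resolves that obstacle through an extensive case analysis --- for $d(u)=3$ alone, five subcases split by the number of degree-$2$ neighbors and the degrees of the vertices at the far ends of the links, invoking ($C_3$) to force far-end degrees equal to $k$, ($C_4$) to switch off $R_2$, and so on; for $d(u)=4,5,6,7$ the analysis similarly hinges on ($C_6$)--($C_{10}$) to guarantee, whenever $u$ feeds many links at high rates, a compensating neighbor of degree at least $8$ paying back $\frac{5}{8}$ or $\frac{4}{5}$. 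Asserting that the rates ``are tuned'' so that this works is a restatement of the lemma, not a proof of it; as it stands, your proposal establishes the easy cases and leaves the core of the lemma unproven.
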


\begin{proof}
  We attribute to each vertex a weight equal to its degree, and apply
  discharging rules $R_1$, $R_2$, $R_3$, $R_4$ and $R_g$. The common
  pot is empty at the beginning and, by Lemma~\ref{lem:Rgvalid}, it
  has non-negative value after applying $R_g$. We show that all the
  vertices have a weight of at least $3$ at the end.

  Let $u$ be a vertex of $G$.  By Lemma~\ref{lem:config}, graph $G$
  does not contain Configurations {($C_1$)} to {($C_{11}$)}.
  According to Configuration {($C_1$)}, we have $d(u) \geq 2$. We now
  consider different cases corresponding to the value of $d(u)$. 

\begin{enumerate}

\item $d(u)=2$.

  So $u$ has an initial weight of $2$ and gives nothing. We show that it
  receives at least $1$, so it has a final weight of at least $3$.

\begin{enumerate}

\item \textit{Assume $u$ is adjacent to a vertex $u_2$ of degree
    $2$}.\\ Then $u$ is a negative vertex of Type \n and receives
  $\frac{1}{5}$ from the common pot by $R_g$.  According to
  Configuration {($C_2$)}, vertex $u$ is adjacent to a vertex $v$ with
  $d(v)=k$. Since $k \geq 17$, according to $R_4$, vertex $v$ gives
  $\frac{4}{5}$ to $u$. 

\item \textit{Assume both neighbors $v_1$ and $v_2$ of $u$ are of
degree at least $3$}. \\
Vertex $u$ is not a negative vertex of Type \n since it has no
neighbor of degree $2$.

\begin{enumerate}
\item \textit{$u$ has two weak neighbors}\\
  Then $u$ is a negative vertex of Type \nnn. It receives
  $\frac{1}{5}$ from the common pot by $R_g$ and $\frac{2}{5}$ from
  each of its two neigbors by $R_{1.1}$.
\item \textit{$u$ has one weak neighbor $w$ and one non-weak neighbor $v$}
  \begin{enumerate}
  \item \textit{$3\leq d(v)\leq 7$}\\
    Vertex $u$ receives $\frac{3}{5}$ from $v$ by $R_{1.2}$ and
    $\frac{2}{5}$ from $w$ by $R_{1.1}$.
  \item \textit{$8\leq d(v)\leq 14$}\\
    Vertex $u$ receives $\frac{5}{8}$ from $v$ by $R_{3}$ and
    $\frac{3}{8}$ from $w$ by $R_{1.4}$.
  \item \textit{$15\leq d(v)$ }\\
    Vertex $w$ is weak and $v$ has degree at least $15$, so one can
    check that $u$ is not negative of Type \n or \nnn. According to
    Configuration {($C_3$)}, it not negative of Type \nn. So $u$ is
    not negative and it receives $\frac{1}{5}$ from $w$ by $R_{1.5}$
    and $\frac{4}{5}$ from $v$ by $R_{4}$.
  \end{enumerate}

\item \textit{$u$ has two non-weak neighbors $v,v'$}
  \begin{enumerate}
  \item \textit{$3\leq d(v)\leq 7$ and $3\leq d(v')\leq 7$ }\\
    Vertex $u$ receives $\frac{1}{2}$ from each neighbor by $R_{1.3}$.
  \item \textit{$3\leq d(v)\leq 7$ and $8\leq d(v')\leq 14$ }\\
    Vertex $u$ receives $\frac{5}{8}$ from $v'$ by $R_{3}$ and
    $\frac{3}{8}$ from $v$ by $R_{1.4}$.
  \item \textit{$3\leq d(v)\leq 7$ and $15\leq d(v')$ }\\
    If $u$ is negative, it receives $\frac{1}{5}$ from the common pot
    by $R_g$.  If $u$ is non-negative, it receives $\frac{1}{5}$ from
    $v$ by $R_{1.5}$. In both cases, it receives $\frac{4}{5}$ from
    $v'$ by $R_{4}$.
  \item \textit{$8\leq d(v)$ and $8\leq d(v')$ }\\
    Vertex $u$ receives at least $\frac{5}{8}$ from each neighbor by
    $R_{3}$ or $R_{4}$ .
  \end{enumerate}

\end{enumerate}

\end{enumerate}
\item $d(u)=3$.\\  So $u$ has an initial weight of $3$. We show that it
  has a final weight of at least $3$.

\begin{enumerate}

\item \textit{Assume $u$ has three neighbors $y_1$, $y_2$ and $y_3$ of
    degree $2$.}\\ Let $z_i$, $1\leq i\leq 3$, be the neighbors of
  $y_i$ distinct from $u$. According to Configuration {($C_3$)},
  $d(z_1)=d(z_2)=d(z_3)=k$. So $y_1$, $y_2$ and $y_3$ are negative
  vertices of Type \nn. So no rule applies to $u$.

\item \textit{Assume $u$ has exactly two neighbors $y_1$ and $y_2$ of
    degree $2$.}\\ Let $z_i$, $1\leq i\leq 2$, be the neighbors of
  $y_i$ distinct from $u$.  Let $x$ the third neighbor of $u$,
  $d(x)\geq 3$.  
According to Configuration
  {($C_3$)}, we are in one of the two following cases:

\begin{enumerate}

\item \textit{$d(x) \geq k-2$.}\\
  Vertex $x$ gives $\frac{4}{5}$ to $u$ by $R_4$ and $u$ gives nothing
  to $x$.

\begin{enumerate}
\item \textit{Assume vertex $u$ is weak.}\\
  Since $u$ is weak, $d(y_i)\leq 14$, so vertex $u$ gives at most
  $\frac{2}{5}$ to each of $y_1,y_2$ by $R_{1.1}$ or $R_{1.4}$.

\item \textit{Assume vertex $u$ is not weak.}\\
  Then, w.l.o.g., $d(z_1)\geq 15$. So vertex $u$ gives at most
  $\frac{1}{5}$ to $y_1$ by $R_{1.5}$. Vertex $u$ gives at most
  $\frac{3}{5}$ to $y_2$ by $R_{1.2}$, $R_{1.3}$, $R_{1.4}$ or
  $R_{1.5}$.
\end{enumerate}

\item \textit{$d(z_1)=d(z_2)=k$.}

\begin{enumerate}

\item \textit{$d(x) \leq 7$.}\\
  According to Configuration {($C_4$)}, vertex $u$ gives nothing to
  $x$ by $R_2$.  Vertices $y_1$ and $y_2$ are negative (of Type \nn)
  and $u$ gives nothing to $y_1$, $y_2$.

\item \textit{$d(x) \geq 8$.}\\
  Vertex $u$ gives $\frac{1}{5}$ to $y_1$ and $y_2$ by $R_{1.5}$.
  Vertex $x$ gives at least $\frac{5}{8}$ to $u$ by $R_3$ or $R_4$.
\end{enumerate}
\end{enumerate}

\item \textit{Assume $u$ has exactly one neighbor $y$ of degree $2$}\\
  Let $z$ be the neighbor of $y$ distinct from $u$.  Let $w$ and $x$
  the other neighbors of $u$, where $d(w)\geq d(x)\geq 3$. We consider
  three cases according to the value of $d(w)$.
\begin{enumerate}

\item \textit{$15 \leq d(w)$.}\\ Then, vertex $u$ gives at most
  $\frac{3}{5}$ to $y$ by $R_{1.i}$, $1\leq i \leq 5$.  Vertex $u$
  gives at most $\frac{1}{10}$ to $x$ by $R_2$.  Vertex $w$ gives
  $\frac{4}{5}$ to $u$ by $R_4$.

\item \textit{$8 \leq d(w) \leq 14$.}\\
  According to Configuration {($C_4$)}, vertex $u$ gives nothing to
  $x$ by $R_2$. Vertex $u$ gives at most $\frac{3}{5}$ to $y$ by
  $R_{1.i}$, $1\leq i \leq 5$.  Vertex $w$ gives $\frac{5}{8}$ to $u$
  by $R_3$.

\item \textit{$d(w) \leq 7$.}\\
  According to Configuration {($C_4$)}, vertex $u$ gives nothing to
  $x$ and $w$ by $R_2$. According to Configuration {($C_3$)}, we have
  $d(z)=k$.
Vertex $u$ gives $\frac{1}{5}$ to $y$ by
  $R_{1.5}$.  Both $w$ and $x$ give $\frac{1}{10}$ to $u$ by
  $R_{2}$.  
\end{enumerate}

\item \textit{Assume all the neighbors of $u$ have degree at least
    $3$ and at most $7$}.\\
According to Configuration {($C_4$)}, vertex $u$ gives nothing to
its neighbors by $R_2$.

\item \textit{Assume $u$ has no neighbor of degree $2$ and at least a neighbor
    $v$
    of degree at least $8$}.\\
  Vertex $v$ gives at least $\frac{5}{8}$ to $u$ by $R_3$ or $R_4$.
  Vertex $u$ gives at most $\frac{1}{10}$ to each of its other
  neighbors by $R_2$.
\end{enumerate}

\item $d(u)=4$.\\
So $u$ has an initial weight of $4$. We show that it
  has a final weight of at least $3$.

\begin{enumerate}

\item \textit{Assume $u$ has at least three neighbors $y_1$, $y_2$ and
    $y_3$ of degree $2$}\\ Let $z_i$ be the neighbors of $y_i$
  distinct from $u$. We assume that $d(z_1)\geq d(z_2)\geq
  d(z_3)$. Let $x$ be the neighbor of $u$ distinct from $y_1$, $y_2$
  and $y_3$. We consider three cases depending on $d(z_2)$ and
  $d(z_3)$.
\begin{enumerate}

\item $d(z_2)\leq 14$.\\
  According to Configuration {($C_7$)}, we have $d(x) \geq
  k-2$. Vertex $u$ gives at most $3 \times \frac{3}{5}$ by $R_{1.i}$,
  $1\leq i\leq 5$. Vertex $x$ gives $\frac{4}{5}$ to $u$ by $R_4$.

\item $d(z_2)\geq 15$ and $d(z_3)\leq 14$.\\
  According to Configuration {($C_6$)}, we have $d(x) \geq 8$.  Vertex
  $u$ gives at most $ \frac{1}{5}$ to each of $y_1,y_2$ by
  $R_{1.5}$. Vertex $u$ gives at most $\frac{3}{5}$ to $y_3$ by
  $R_{1.i}$.

\item $d(z_3) \geq 15$.\\Vertex $u$ gives at most $\frac{1}{5}$
  to each of its neighbors by $R_{1.5}$.
\end{enumerate}

\item \textit{Assume $u$ has exactly two neighbors $y_1$ and $y_2$ of
    degree $2$}\\ Let $z_i$ be the neighbors of $y_i$ distinct from
  $u$.  We assume that $d(z_1)\geq d(z_2)$. Let $w$ and $x$ the
  neighbors of $u$ distinct from $y_1$, $y_2$. We assume that $d(w)
  \geq d(x)\geq 3$. We consider two cases depending on $d(z_1)$.
\begin{enumerate}

\item $d(z_1) \leq 14$.\\
  According to Configuration {($C_7$)}, we have $d(w) \geq 9$.  Vertex
  $u$ gives at most $\frac{3}{5}$ to each of $y_1,y_2$ by
  $R_{1.i}$, and at most $\frac{1}{10}$ to $x$ by $R_2$.
  Vertex $x$ gives at least $\frac{5}{8}$ to $u$ by $R_3$ or $R_4$.

\item $d(z_1) \geq 15$.\\
  Vertex $u$ gives at most $\frac{1}{5}$ to $y_1$ by $R_{1.6}$, at
  most $\frac{3}{5}$ to $y_2$ by $R_{1.i}$, and at most $\frac{1}{10}$
  to each of $w,x$ by $R_2$.
\end{enumerate}

\item \textit{Assume $u$ has at most one neighbor of degree
    $2$}.\\
  Vertex $u$ gives at most $3 \times \frac{1}{10}$ by $R_{2}$, and at
  most $\frac{3}{5}$ by $R_{1.i}$.
\end{enumerate}

\item $d(u)=5$.\\
  So $u$ has an initial weight of $5$. We show that it has a final
  weight of at least $3$.

\begin{enumerate}

\item \textit{Assume $u$ has at least four neighbors $y_1$, $y_2$,
    $y_3$ and $y_4$ of degree $2$}\\
  Let $z_i$ be the neighbors of $y_i$ distinct from $u$.  We assume
  that $d(z_1)\geq d(z_2)\geq d(z_3) \geq d(z_4)$. Let $x$ be the
  neighbor of $u$ distinct from the $y_i$'s.  We consider two cases
  depending on $d(z_4)$.

\begin{enumerate}

\item $d(z_4) \leq 7$.\\
  According to Configuration {($C_8$)}, we have $d(x) \geq 8$.  Vertex
  $u$ gives at most $\frac{3}{5}$ to each of $y_i$ by $R_{1.i}$.
  Vertex $x$ gives at least $\frac{5}{8}$ to $u$ by $R_3$ or $R_4$.

\item $d(z_4) \geq 8$.\\ Vertex $u$ gives at most $5 \times
  \frac{3}{8}$ to each of $y_i$ and $x$ by $R_{1.4}$ or $R_{1.5}$.
\end{enumerate}

\item \textit{Assume $u$ has at most three neighbors of degree
    $2$}.\\
  Vertex $u$ gives at most $3 \times \frac{3}{5}$ by $R_{1.i}$, and at
  most $2\times \frac{1}{10}$ by $R_{2}$.

\end{enumerate}

\item $d(u)=6$.\\
  So $u$ has an initial weight of $6$. We show that it has a final
  weight of at least $3$.

\begin{enumerate}

\item \textit{Assume $u$ has at least five neighbors $y_1, \ldots,
    y_5$,
    of degree $2$}\\
  Let $z_i$ be the neighbors of $y_i$ distinct from $u$.  We assume
  that $d(z_1)\geq \cdots \geq d(z_5)$. Let $x$ be the neighbors of
  $u$ distinct from $y_i$'s.  According to Configuration {($C_9$)}, we
  are in one of the following two cases.

\begin{enumerate}

\item $d(z_5) \geq 8$.\\
  Vertex $u$ gives at most $6\times \frac{3}{8}$ to each of its
  neighbors by $R_{1.4}$ or $R_{1.5}$.

\item $d(x) \geq 8$.\\
  Vertex $u$ gives at most $5 \times \frac{3}{5}$ to each of $y_i$.
\end{enumerate}

\item \textit{Assume $u$ has at most four neighbors of degree
    $2$}.\\
  Vertex $u$ gives at most $4 \times \frac{3}{5}$ by $R_{1.i}$, and at
  most $2\times \frac{1}{10}$ by $R_{2}$.

\end{enumerate}

\item $d(u)=7$.\\
 So $u$ has an initial weight of $7$. We show that it has a final
  weight of at least $3$.

\begin{enumerate}

\item \textit{Assume $u$ has at least six neighbors of degree $2$
    adjacent to vertices of degree at most $3$}.\\
  According to Configuration {($C_{10}$)}, vertex $u$ has a neighbor
  $v$ of degree at least $8$.  Vertex $u$ gives at most $6 \times
  \frac{3}{5}$ by $R_{1.i}$.

\item \textit{Assume $u$ has at most five neighbors of degree $2$
    adjacent to vertices of degree at most $3$}.\\Vertex $u$ gives at
  most $5 \times \frac{3}{5}$ by $R_{1.i}$, and at most $2 \times
  \frac{1}{2}$.
\end{enumerate}

\item $8 \leq d(u) \leq 14$.\\
Then Rule $R_3$  applies to every
neighbor of $u$, and $d(u)-(d(u) \times \frac{5}{8})\geq 3$.

\item $15 \leq d(u) < k$.\\Then Rule $R_4$ applies to every neighbor
  of $u$, and $d(u)-( d(u) \times \frac{4}{5}) \geq 3$.

\item $d(u)=k$.\\Then Rule $R_4$ applies to every neighbor of $u$
 and $R_g$ applies to $u$. We have $k \geq 17$ so $k -( k \times
\frac{4}{5}+ \frac{2}{5})\geq 3$.
\end{enumerate}

Consequently, after application of the discharging rules, every vertex
$v$ of $G$ has a weight of at least $3$, meaning that $\sum_{v \in G}
d(v) \geq \sum_{v \in G} 3 = 3 |V|$. Therefore, $\mad(G) \geq 3$.
\end{proof}

Finally, $k$ is a constant integer greater than $17$ and $G$ is a
minimal graph such that $\Delta(G) \leq k$ and $G$ admits no
$2$-distance $(k+2)$-list-coloring.  By Lemma~\ref{lem:mad}, we have
$\mad(G) \geq 3$. So Theorem~\ref{thm:main} is true.

\section{Conclusion}


We proved that graphs with $\Delta(G) \geq 17$ and maximum average degree less than $3$ are list $2$-distance $(\Delta(G)+2)$-colorable. The key idea in the proof is to use Brooks' lemma (Lemma~\ref{claim:Brooks}) instead of the usual special case of an even cycle being 2-choosable. Thus we can prove stronger structural properties, which results in a global arborescent structure that is a cactus. As far as we know, Brooks' lemma has not been used in a global discharging proof before, and it might be useful for other problems. One remaining question would be to determine the maximum $\Delta(G)$ of a graph $G$ with $\mad(G)<3$ that is not $2$-distance $(\D+2)$-colorable. By Theorem~\ref{thm:main}, it cannot be more than $16$.

Note that these proofs can be effortlessly transposed to list injective $(\Delta(G)+1)$-coloring. Indeed, every vertex we color has a neighbor that is already colored. This means that in the case of list injective coloring, every vertex we color has at least one constraint less than in the case of list $2$-distance coloring. Consequently, $\Delta(G)+1$ colors are enough in the case of list injective coloring, as mentioned in the introduction.

In contrast to Theorem~\ref{thm:main}, other results have been obtained on the 2-distance coloring of planar graphs of girth at least $6$ when more colors are allowed. For example, Bu and Zhu~\cite{bz11} proved that every planar graph $G$ of girth at least $6$ was $2$-distance $(\D+5)$-colorable.

\bibliographystyle{plain}

\end{document}